\documentclass{article}
\usepackage{ijcai26}

\usepackage{times}
\usepackage{soul}
\usepackage{url}
\usepackage[hidelinks]{hyperref}
\usepackage[utf8]{inputenc}
\usepackage[small]{caption}
\usepackage{graphicx}
\usepackage{amsmath}
\usepackage{amsthm}
\usepackage{booktabs}
\usepackage{algorithm}
\usepackage{algorithmic}
\usepackage[switch]{lineno}

\newtheorem{example}{Example}
\newtheorem{theorem}{Theorem}

\title{Will My Favorite Chases Terminate if Evaluating Conjunctive Queries Does? One Does Not Simply Decide This}

 \author{
 Lucas Larroque$^1$
 \and
 Quentin Manière$^2$\\
 \affiliations
 $^1$Inria, DI ENS, ENS, CNRS, PSL University, Paris, France\\
 $^2$LIRMM, Inria, University of Montpellier, CNRS, Montpellier, France\\
 \emails
 \{lucas.larroque, quentin.maniere\}@inria.fr
 }

\usepackage{cmap}
\usepackage[T1]{fontenc}
\usepackage{lmodern}
\usepackage{microtype}
\usepackage{mathtools,amsthm,amssymb}
\usepackage{xparse}
\usepackage{array,booktabs}
\usepackage{enumitem}
\usepackage{tikz}
\usetikzlibrary{shapes,arrows.meta, patterns.meta}
\usetikzlibrary{automata,positioning}
\usepackage{xspace}
\usepackage{todonotes}
\usepackage[appendix=append]{apxproof} 

\usepackage[hidelinks]{hyperref}
\usepackage{zref-clever}
\zcsetup{cap, noabbrev}

\theoremstyle{plain}
\newtheoremrep{theorem}{Theorem}
\newtheoremrep{proposition}[theorem]{Proposition}
\newtheoremrep{lemma}[theorem]{Lemma}
\newtheoremrep{corollary}[theorem]{Corollary}
\newtheoremrep{conjecture}[theorem]{Conjecture}
\newtheoremrep{assumption}[theorem]{Assumption}
\newtheoremrep{observation}[theorem]{Observation}

\theoremstyle{definition}
\newtheoremrep{definition}[theorem]{Definition}
\newtheoremrep{openProblem}[theorem]{Open Problem}

\zcRefTypeSetup{observation}{
  Name-sg = Observation, name-sg = observation,
  Name-pl = Observations, name-pl = observations
}
\zcRefTypeSetup{openProblem}{
  Name-sg = Open Problem, name-sg = open problem,
  Name-pl = Open Problems, name-pl = open problems
}
\zcRefTypeSetup{assumption}{
  Name-sg = Assumption, name-sg = assumption,
  Name-pl = Assumptions, name-pl = assumptions
}
\zcRefTypeSetup{conjecture}{
  Name-sg = Conjecture, name-sg = conjecture,
  Name-pl = Conjectures, name-pl = conjectures
}

\AddToHook{env/proposition/begin}{\zcsetup{countertype={theorem=proposition}}}
\AddToHook{env/lemma/begin}{\zcsetup{countertype={theorem=lemma}}}
\AddToHook{env/corollary/begin}{\zcsetup{countertype={theorem=corollary}}}
\AddToHook{env/conjecture/begin}{\zcsetup{countertype={theorem=conjecture}}}
\AddToHook{env/assumption/begin}{\zcsetup{countertype={theorem=assumption}}}
\AddToHook{env/observation/begin}{\zcsetup{countertype={theorem=observation}}}
\AddToHook{env/definition/begin}{\zcsetup{countertype={theorem=definition}}}
\AddToHook{env/openProblem/begin}{\zcsetup{countertype={theorem=openProblem}}}

\AddToHook{env/propositionrep/begin}{\zcsetup{countertype={theorem=proposition}}}
\AddToHook{env/lemmarep/begin}{\zcsetup{countertype={theorem=lemma}}}
\AddToHook{env/corollaryrep/begin}{\zcsetup{countertype={theorem=corollary}}}
\AddToHook{env/conjecturerep/begin}{\zcsetup{countertype={theorem=conjecture}}}
\AddToHook{env/assumptionrep/begin}{\zcsetup{countertype={theorem=assumption}}}
\AddToHook{env/observationrep/begin}{\zcsetup{countertype={theorem=observation}}}
\AddToHook{env/definitionrep/begin}{\zcsetup{countertype={theorem=definition}}}
\AddToHook{env/openProblemrep/begin}{\zcsetup{countertype={theorem=openProblem}}}

\newcommand*{\Rulename}[1]{\tag{\ensuremath{#1}}}

\makeatletter
\def\set#1{\@ifnextchar[{\setinternal#1}{\relax\if@display \left\{#1\right\} \else \{ #1 \} \fi}}
\def\setinternal#1[#2]{\if@display \left\{\,#1 \mid #2\,\right\} \else \{\,#1 \mid #2\,\}\fi}
\makeatother

\newcommand*{\size}[1]{\left|#1\right|}

\newcommand*{\nat}{\mathbb{N}}

\newcommand*{\pair}[1]{\langle #1 \rangle}

\newcommand{\FormatSignature}[1]{\ensuremath{\mathsf{#1}}}
\newcommand*{\Preds}{\FormatSignature{Preds}}

\newcommand*{\Consts}{\FormatSignature{Consts}}
\newcommand*{\Vars}{\FormatSignature{Vars}}
\newcommand*{\Terms}{\FormatSignature{Terms}}
\newcommand*{\ar}{\Function{ar}}

\newcommand*{\Set}[1]{\mathcal{#1}}
\newcommand*{\atomset}{\inst}
\newcommand*{\model}{\Set{M}}

\newcommand*{\inst}{\Set{I}}

\newcommand*{\IEnd}{\Set{I}_{\Predicate{E}}}
\newcommand*{\critstep}[1]{\Set{J}_{#1}}
\newcommand*{\rs}{\Set{R}}
\newcommand*{\rsReduc}{\rs_\Set{M}}
\newcommand*{\kb}{\Set{K}}
\newcommand*{\Cmc}{\mathcal{C}}
\newcommand*{\chaseterm}[2]{\textup{CT}^{#1}_{#2}}
\newcommand*{\bts}{\textit{bts}\xspace}
\newcommand*{\fus}{\textit{fus}\xspace}
\newcommand*{\encSet}{\Set{G}}

\newcommand*{\aformula}{\varphi}
\newcommand*{\arule}{R}

\newcommand*{\ahom}{h}

\newcommand*{\query}{q}
\newcommand*{\der}{\mathcal{D}}
\newcommand*{\stopthecount}{M}

\newcommand*{\Function}[1]{\mathsf{#1}}
\newcommand*{\chase}{\Function{Ch}}
\newcommand*{\adom}{\Function{adom}}

\newcommand*{\stepX}[1]{\chase^\X_{#1}}
\newcommand*{\stepOb}[1]{\chase^\Ob_{#1}}
\newcommand*{\stepSO}[1]{\chase^\SO_{#1}}

\newcommand*{\stepXend}[1]{\stepX{#1}(\IEnd)}
\newcommand*{\stepOend}[1]{\stepOb{#1}(\IEnd)}

\newcommand*{\frontier}{\Function{fr}}

\newcommand*{\res}{\Function{res}}
\newcommand*{\trigoutput}{\Function{out}}
\newcommand*{\trigsupport}{\Function{supp}}
\newcommand*{\triggers}{\Function{trigs}}

\newcommand*{\nextconfig}{\Function{next}}
\newcommand*{\nextenc}{g}
\newcommand*{\enc}{\Function{enc}}

\newcommand*{\PGraph}[1]{G_{#1}}
\newcommand*{\SGraph}{\PGraph{\pSucc,\pSuccz}}
\newcommand*{\DatS}[1]{#1^\forall}
\newcommand*{\NDatS}[1]{#1^\exists}

\newcommand*{\Predicate}[1]{\mathtt{#1}}
\newcommand*{\apred}{\Predicate{P}}

\newcommand*{\pEnd}{\Predicate{End}}

\newcommand*{\pG}{\Predicate{G}}
\newcommand*{\pSucc}{\Predicate{S}}
\newcommand*{\pSuccz}{\pSucc_0}
\newcommand*{\pFlood}{\Predicate{Flood}}
\newcommand*{\pR}{\Predicate{R}}
\newcommand*{\pT}{\Predicate{T}}

\newcommand*{\Ob}{\mathbb{O}}
\newcommand*{\SO}{\mathbb{SO}}
\newcommand*{\R}{\mathbb{R}}
\newcommand*{\E}{\mathbb{E}}
\newcommand*{\X}{\mathbb{X}}

\newcommand*{\threecm}{\mathcal{M}}

\begin{document}

\maketitle

\begin{abstract}
	Existential rules are a prominent formalism to enrich a database with knowledge from the domain of interest, but make even basic reasoning tasks on the resulting knowledge base undecidable.
	To circumvent this, several classes of rules offering various useful properties have been identified.
	One such class, for instance, contains all sets of rules on which the chase algorithm always terminates, which guarantees the existence of a finite universal model.
	However, these classes are often abstract rather than concrete: it may be undecidable to check whether a given set of rules belongs to them.
    Given that the most studied classes of existential rules are designed for reasoning on databases, thus ensuring decidable conjunctive query entailment, we ask:
    Within a class that supports decidable query entailment, do the usual abstract classes become concrete?
    We answer in the negative for classes based upon the termination of all classical chase variants and for the \bts class.
\end{abstract}

\section{Introduction}
\label{sec:introduction}

One of the main tasks in knowledge representation and reasoning is ontology-based data access (OBDA).
In OBDA, an ontology is used to enrich a database with domain knowledge, enabling the inference of new information that is not explicitly stored in the database.
Existential rules (also known as Datalog$^\pm$, or tuple-generating dependencies) are a prominent formalism to represent ontologies in OBDA \cite{DBLP:reference/db/Fagin18b}.
A fundamental reasoning task in OBDA is the Boolean conjunctive query (BCQ) entailment problem, which consists in deciding whether a BCQ is entailed by a database and an ontology of interest.
Unfortunately, with ontologies expressed as sets of existential rules, BCQ entailment is undecidable in general \cite{beeri-vardi-81,CALI201257}.
To ensure decidable BCQ entailment, three main properties have been proposed.

First, the chase procedure \cite{10.1145/320107.320115} is a materialization-based algorithm that, given a database and a set of existential rules, iteratively applies the rules to the database to produce a (possibly infinite) universal model of the input.
If finite, this universal model can then be queried using regular BCQ evaluation techniques to decide entailment.
Many variants of the chase procedure exist, differing in the way rules are applied and redundancies are handled. 
In this paper, we consider the oblivious \cite{DBLP:journals/jair/CaliGK13}, semi-oblivious (\emph{a.k.a.} Skolem) \cite{generalized_schema_mappings}, restricted (\emph{a.k.a.} standard) \cite{FAGIN200589}, and core chase variants \cite{chase_revisited}.
Guaranteed termination of the chase thus implies decidable BCQ entailment, but this property is also desirable in data exchange settings, where one is interested in translating data from a source schema to a target schema while preserving certain properties \cite{10.1145/1061318.1061323}.
It also enables computing aggregates, which otherwise make little sense on infinite models \cite{10.1145/1376916.1376936}, and ensures finite controllability, \emph{i.e.}\ only considering finite models does not impact query answers.

Second, query rewriting techniques aim at rewriting the input BCQ into a first-order query that can be directly evaluated over the input database.
Rule sets for which such a rewriting exists (and is computable) are called $\fus$, for finite unification set \cite{10.5555/1661445.1661553}.
Again, $\fus$ rule sets have benefits beyond decidability of BCQ entailment, as they allow answering the rewritten query over any database. In particular, this allows answering queries even without edit access, or answering queries efficiently in streaming settings \cite{DBLP:conf/aaai/RoncaKGMH18}.

Third, bounded treewidth sets (\bts) \cite{DBLP:journals/jair/CaliGK13} are classes of existential rules that ensure that, for any database, the universal model produced by the chase has bounded treewidth.
This property renders BCQ entailment decidable by using Courcelle's theorem \cite{DBLP:journals/iandc/Courcelle90}, which states that any property definable in monadic second-order logic can be decided in linear time over structures of bounded treewidth. In particular, this helps with efficient answer counting \cite{DBLP:journals/lmcs/FeierLP23}.

Deciding whether a given set of existential rules has any of these properties (\emph{i.e.}\ whether it guarantees termination of a certain chase variant, is \emph{fus}, or is \emph{bts}) is, here again, undecidable in general. 
The classes of such sets of rules are thus called \emph{abstract}, as opposed to \emph{concrete} classes for which membership is decidable.
Efforts have been devoted to proving that, when restricting to some concrete classes of existential rules, membership in the abstract classes presented above becomes decidable.
This is especially the case for chase termination: oblivious and semi-oblivious chase termination have been proven decidable for guarded and for sticky rules \cite{10.1145/2745754.2745773,calautti_et_al:LIPIcs.ICDT.2019.17}, while core chase termination is decidable for guarded rules \cite{10.1145/2274576.2274600} but remains open for sticky rules. 
Restricted chase termination appears more difficult: its decidability has been proven for single-head guarded and sticky rules \cite{10.1145/3375395.3387644}, but remains open in the multi-head case for both classes. 
The multi-head case for linear rules, a subclass of guarded rules, has been solved only recently \cite{KR2025-34}.

These approaches vary greatly and often require sophisticated constructions, but they all share two properties: the considered concrete class always enjoys decidable BCQ entailment, and termination of the considered variant of the chase is always proved decidable.
This raises the following question: do we get decidable chase termination (or \fus membership? or \bts membership?) for free when considering concrete classes of rules with decidable BCQ entailment?
For \fus, the answer is known to be negative, as shown by \cite{GaifmanMSV93}, who proved that boundedness of Datalog (which coincides with \fus membership here) is undecidable.
Surprisingly, for chase termination and \bts membership, the question remains open. 
We answer it negatively in this paper:
\begin{theorem}
	\label{theorem:ultimate-goal}
	There exists a concrete class $\Cmc$ of sets of existential rules such that:
	\begin{enumerate}
		\item Chase termination is undecidable in $\Cmc$ for the oblivious, semi-oblivious, restricted, and core chases,
		\item $bts$ membership is undecidable in $\Cmc$, and
		\item BCQ entailment is decidable in $\Cmc$.
	\end{enumerate}
\end{theorem}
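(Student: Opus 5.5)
We will build $\Cmc$ by a reduction from a halting problem for a Turing-complete device, most conveniently a deterministic 3-counter machine $\threecm$ started on empty counters. To each machine $\threecm$ we associate a rule set $\rsReduc$. The class $\Cmc$ is the set of all such rule sets; it is concrete because the encoding is syntactic and recognisable. The reduction is to satisfy two properties.

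\emph{(a)} If $\threecm$ halts, then every chase variant terminates on every instance, and the resulting models have bounded treewidth.

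\emph{(b)} If $\threecm$ does not halt, then some instance makes even the core chase diverge, with a universal model of unbounded treewidth.

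Since the core chase terminates whenever any of the other variants does, (a) and (b) together give points 1 and 2 of \zcref{theorem:ultimate-goal} by reduction from the complement of halting.

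The rules split into two parts. The first part is a simple non-Datalog "generator" $\NDatS{\rs}$. Its rules repeatedly create fresh successor nulls with $\pSucc$, and also $\pSuccz$ for the first one, building a chain of nulls. The second part is a Datalog part $\DatS{\rs}$. It reads the chain as an encoding of successive configurations of $\threecm$ and checks that step $i+1$ follows from step $i$ via $\nextconfig$. Once an $\pEnd$ atom is derived (the halting state is reached, or a local inconsistency is found), a $\pFlood$ rule makes every predicate hold on all pairs of terms. In particular the generator's heads become satisfied, so restricted and core chases stop creating nulls. For the oblivious and semi-oblivious chases, which ignore satisfaction, we instead guard every existential rule by a body atom that is consumed or blocked. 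One option is to place the chain construction inside rules whose frontier is bounded, so that after flooding only finitely many distinct triggers exist. An alternative is to let the generator create only boundedly many new nulls per frontier and to bound the frontier size via the simulation length. Handling all four variants uniformly is the step I expect to be the main obstacle.

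Concretely, the critical case is an arbitrary instance, not only the intended one. On an arbitrary database the chain may start from any term. The construction must ensure that every path of generated nulls either faithfully simulates $\threecm$, and therefore reaches $\pEnd$ within the halting time $N$, or is detected as malformed within a bounded number of steps, which triggers flooding. This gives a uniform bound of roughly $N$ on the depth of null generation. That bound yields termination for all variants and treewidth bounded by the size of the database plus a constant. Conversely, on the intended start instance $\IEnd$-free, a nonhalting run yields an infinite chain. To get unbounded treewidth we will have the Datalog part also add grid-like $\pG$ edges between configuration $i$ and configuration $i+1$. A faithful simulation then contains arbitrarily large grids, and these survive in every universal model (in particular the core), since no homomorphism can fold a non-flooded faithful simulation.

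For point 3 we will not use termination, because that is exactly what is undecidable. Instead we will exploit the structure of $\rsReduc$: it is a fixed shape of existential part plus Datalog. We will show that BCQ entailment reduces to a decidable problem. The key observation is that, since flooding makes everything true, a query is entailed if flooding occurs, and otherwise the relevant part is a tree-like chase of the generator alone, which is guarded or linear. Deciding whether flooding occurs is itself undecidable in general, though. So the better route is to show that the chase of $\rsReduc$ on any database $\mathcal{D}$ is homomorphically equivalent, for queries of size $k$, to a finite prefix of length depending on $k$ and $\mathcal{D}$, independently of $\threecm$'s long-term behaviour. Concretely, an unflooded faithful simulation prefix long enough to contain the query's pattern suffices, because encoded configurations look locally alike. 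Making the query-evaluation argument robust to the grid encoding is the second delicate point.
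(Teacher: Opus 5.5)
Your top-level plan matches the paper: if $\threecm$ halts, every variant terminates on every instance; if not, even the core chase diverges; then use the chase hierarchy. The concrete construction, however, cannot work. It makes halting a \emph{finitely observable event}: reaching the halting state derives $\pEnd$, and flooding then makes every predicate true. This does not merely make point~3 delicate, it refutes it. On your intended start instance, the one-atom BCQ $\exists x\,\pEnd(x)$ is entailed iff $\threecm$ halts, so BCQ entailment in your $\Cmc$ is undecidable. Hence no prefix bound depending only on $\size{\query}$ and $\mathcal{D}$ can exist, and a computable bound that also depends on $\threecm$ would decide halting.

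The same design leaves point~1 unproved for the oblivious and semi-oblivious chases, which is exactly your ``main obstacle''. A generator whose body is satisfied by its own output fires forever, whatever Datalog derives later, because these chases ignore head satisfaction. The chase is monotone, so no body atom can be ``consumed''. After flooding, every body matches everywhere, which only adds triggers on an ever-growing set of terms, so bounding the frontier does not help either. Making heads true only stops the restricted and core chases.

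The missing idea is to invert the roles of generation and flooding. In the paper the only existential rule is $R_\exists$: $\pG(y,z),\pEnd(z)\to\exists x\,\pSuccz(x,y),\dots$. It is gated by a \emph{positive} condition that is derivable only while the computation goes on. A term $y$ gets a fresh predecessor only if its own private simulation reaches the well $w$. This simulation is carried out by Datalog arithmetic along the chain towards $w$, with configurations encoded as single numbers via prime factorisation. Because $w$ has an $\pSucc$-loop but no $\pSuccz$-loop, the chain grows forever and cannot be folded iff $\encSet$ is unbounded, i.e.\ iff $\threecm$ does not halt; so even the equivalent chase diverges in that case. If $\threecm$ halts, the oblivious chase on the critical instance stops within $\max\encSet+1$ rounds. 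That gives termination for all variants on all instances, with no error detection needed for malformed databases, since each element recomputes from scratch. Halting thus never appears as a finite event, only as boundedness of the chains.

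Flooding is triggered by having an $\pSucc$-predecessor, i.e.\ by the computation \emph{continuing}, and it links all flooded elements by $\pG$, $\pR_j$, $\pT_j$. This yields an infinite $\pR_0$-clique, so no grids are needed for point~2. It also makes everything except the $\pSucc$/$\pSuccz$-chains uniform. A query homomorphism can then be shifted along its chain, so checking $\stepSO{M}(\inst)\models\query$ with $M=2\size{\query}+1$ decides entailment.
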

This paper is dedicated to proving \zcref{theorem:ultimate-goal} by explicitly constructing such a class $\Cmc$, in \zcref{sec:ruleset}, made of sets of rules that simulate Minsky machines.
We then establish the negative results regarding chase termination and \bts membership in \zcref{sec:chase-termination}, before proving decidability of BCQ entailment in \zcref{sec:bcq-entailment}.
Our simulation of Minsky machines is inspired from \cite{DBLP:conf/icalp/GogaczM14}, a paper interested in termination of the oblivious, semi-oblivious and restricted chases.
We modify their construction in non-trivial ways to also support the core chase and to introduce a flooding mechanism that is instrumental in achieving the properties regarding \bts membership and BCQ entailment.

\section{Preliminaries}\label{sec:prelim}

We mostly follow \cite{KR2022-11} for the preliminaries and assume some basic knowledge of first-order logic.

\paragraph{First-Order Logic (FOL)}
We define $\Preds$, $\Consts$, and $\Vars$ to be mutually disjoint, countably infinite sets of \emph{predicates}, \emph{constants}, and \emph{variables}, respectively.
Every $\apred \in \Preds$ has an \emph{arity} $\ar(\apred) \geq 0$.
Let $\Terms = \Consts \cup \Vars$ be the set of \emph{terms}.
We write lists $t_1, \ldots, t_n$ of terms as $\bar{t}$ and often treat them as sets.
For a formula or set thereof $\aformula$, let $\Preds(\aformula)$, $\Consts(\aformula)$, $\Vars(\aformula)$, and $\Terms(\aformula)$ be the sets of all predicates, constants, variables, and terms that occur in $\aformula$, respectively.

A ($\apred$-)\emph{atom} is a FOL formula $\apred(\bar{t})$ with $\apred$ a $\vert \bar{t} \vert$-ary predicate and $\bar{t} \in \Terms$.
For a formula $\aformula$, $\aformula[\bar{x}]$ indicates that $\bar{x}$ contains all free variables that occur in $\aformula$.

\begin{definition}
	\label{def:exist-rule}
	An \emph{(existential) rule} $\arule$ is a FOL formula
	\begin{align}
		\forall \bar{x} \forall \bar{y} ~ \big(B[\bar{x}, \bar{y}] \rightarrow \exists \bar{z} ~ H[\bar{y}, \bar{z}]\big) \label{rule}
	\end{align}
	where $\bar{x}$, $\bar{y}$, and $\bar{z}$ are pairwise disjoint lists of variables; and $B$ and $H$ are (finite) non-empty conjunctions of constant-free atoms, called the \emph{body} and the \emph{head} of $R$, respectively.
	The set $\bar{y}$ is the \emph{frontier} of $\arule$, denoted with $\frontier(\arule)$.
	If $\bar{z}$ is empty, then $\arule$ is a \emph{Datalog} rule.
\end{definition}

We usually omit universal quantifiers in rules.

An \emph{instance} $\inst$ is an existentially closed conjunction of atoms.
Its \emph{active domain} $\adom(\inst)$ is the set of all terms occurring in $\inst$.
A \emph{Boolean conjunctive query} (BCQ) has the same form as an instance, and we often identify both notions.
A \emph{knowledge base} (KB) $\kb$ is a tuple $\pair{\rs, \inst}$ with $\rs$ a rule set and $\inst$ an instance.
We often identify rule bodies, rule heads, and instances with sets of atoms.

Given atom sets $\atomset$ and $\atomset'$, a \emph{homomorphism} $\ahom$ from $\atomset$ to $\atomset'$
is a function with domain $\Vars(\atomset)$ such that $\ahom(\atomset) \subseteq \atomset'$;
$\ahom$ is an \emph{isomorphism} from  $\atomset$ to $\atomset'$ if additionally, $\ahom$ is injective and $\ahom^{-1}$ is a homomorphism from $\atomset$ to $\atomset'$.
A homomorphism $\ahom$ from $\atomset$ to $\atomset'$ is a \emph{retraction} if $\ahom$ is the identity over $\Vars(\atomset)\cap\Vars(\atomset')$.

We identify logical interpretations with atom sets.
An atom set $\inst$ \emph{satisfies} a rule $\arule = B \to \exists \bar{z} ~ H$ if, for every homomorphism $\ahom$ from $B$ to $\inst$, there is an extension $\hat{\ahom}$ of $\ahom$ with $\hat{\ahom}(H)\subseteq \inst$; equivalently, $\inst$ is a \emph{model} of $\arule$.
An atom set $\model$ is a \emph{model of an instance} $\inst$ if there is a homomorphism from $\inst$ to $\model$, and it is a \emph{model of a KB} $\pair{\rs, \inst}$ if it is a model of $\inst$ and satisfies all rules in $\rs$.
Given KBs or atom sets $A$ and $B$, $A$ \emph{entails} $B$, written $A \models B$, if every model of $A$ is a model of $B$; $A$ and $B$ are \emph{equivalent} if $A \models B$ and $B \models A$.
Given atom sets $\inst$ and $\inst'$, it is known that $\inst \models \inst'$ iff there is a homomorphism from $\inst'$ to $\inst$. A model $\model$ of a KB $\kb$ is \emph{universal} if there is a homomorphism from $\model$ to every model of $\kb$.

Every KB $\kb$ admits some (possibly infinite) universal model \cite{chase_revisited}. Hence, $\kb \models \query$ for any BCQ $\query$ iff there is a homomorphism from a universal model of $\kb$ to $\query$.
The \emph{BCQ entailment} problem takes as input a KB $\kb$ and a BCQ $\query$ and asks if $\kb \models \query$; it is undecidable \cite{beeri-vardi-81}.

\paragraph{The chase}
The chase is a family of procedures that repeatedly apply rules until a fixpoint is reached.

\begin{definition}[Triggers and derivations]
	\label{def:triggers-derivations}
	Given an instance $\inst$, a \emph{trigger} $t$ on $\inst$ is a tuple $\pair{\arule, \pi}$ with $\arule = B \to \exists \bar{z} ~ H$ a rule and $\pi$ a homomorphism from $B$ to $\inst$.
	Let $\trigsupport(t) = \pi(B)$ and $\trigoutput(t) = \pi^\arule(H)$, where $\pi^\arule$ extends $\pi$ by mapping all $z \in \bar{z}$ to the fresh variable $z_t$ that is unique for $z$ and $t$.
	A \emph{derivation} from a KB $\pair{\rs, \inst}$ is a sequence $\der  = (\emptyset, \inst_0), (t_1, \inst_1), \ldots$ such that:
	\begin{enumerate}
		\item Every $\inst_i$ in $\der$ is an instance; moreover, $\inst_0 = \inst$.
		\item Every $t_i$ in $\der$ is a trigger $\pair{R, \pi}$ on $\inst_{i-1}$ such that $\arule \in \rs$, $\trigoutput(t_i) \not \subseteq \inst_{i-1}$, and $\inst_i = \inst_{i-1} \cup \trigoutput(t_{i})$.
	\end{enumerate}
	The \emph{result} $\res(\der)$ of $\der$ is	the union of all instances in $\der$, and $\triggers(\der)$ is the set of all triggers in $\der$.
\end{definition}

Different chase variants build specific derivations according to different criteria of trigger applicability. Below, the letters $\Ob$, $\SO$, $\R$, and  $\E$ respectively refer to the oblivious, semi-oblivious, restricted, and equivalent\footnote{
	The equivalent chase, like the better-known core chase, halts exactly when the KB has a finite universal model \cite{DBLP:journals/tplp/DelivoriasLMU21}, while being monotonic $(\forall i, \inst_i \subseteq \inst_{i+1})$.
}
variants.

\begin{definition}[Applicability]
	\label{definition:applicability}
	A trigger $t = \pair{\arule, \ahom}$ on an instance $\inst$ is \emph{$\Ob$-applicable} on $\inst$ if  $\trigoutput(t) \not \subseteq \inst$; \emph{$\SO$-applicable} on $\inst$ if $\trigoutput(t') \not \subseteq \inst$ for every trigger $t' = (R, \pi')$ with $\pi(\frontier(R)) = \pi'(\frontier(R))$; \emph{$\R$-applicable} on $\inst$ if there is no retraction from $\inst \cup \trigoutput(t)$ to $\inst$; \emph{$\E$-applicable} on $\inst$ if there is no homomorphism from $\inst \cup \trigoutput(t)$ to $\inst$.
\end{definition}

\begin{example}
	\label{example:preliminaries}
	Consider the KB $\kb = \pair{\rs, \inst}$ with $\rs = \set{R = \apred(x,y) \rightarrow \exists z ~ \apred(y,z), \apred(z,z)}$ and $\inst = \set{ \apred(a,a) }$ for some $a$.
	The trigger $t_1 = (R, \set{ x \mapsto a, y \mapsto a })$, whose output is $\trigoutput(t_{1}) = \set{ \apred(a, z_{t_1}), \apred(z_{t_1},z_{t_1})}$, is $\Ob$ and $\SO$-applicable on $\inst$ but not $\R$ or $\E$-applicable, since there is a retraction from $\inst \cup \trigoutput(t_{1})$ to $\inst$ that maps $z_{t_1}$ to $a$.
\end{example}

\begin{definition}[$\X$-Chase]
	For an $\X \in \set{\Ob, \SO, \R, \E}$, an \emph{$\X$-derivation from a KB $\kb = \pair{\rs, \inst}$} is
	a derivation $\der$ such that every trigger $t_{i} \in \triggers(\der)$ is  $\X$-applicable on $\inst_i$.
	An $\X$-derivation $\der$ is \emph{fair} if for every $\inst_i$ occurring in $\der$ and $\X$-applicable trigger $t$ on $\inst_i$,
	there is some $j > i$ such that $t$ is not $\X$-applicable on $\inst_{j}$.
	An $\X$-derivation is \emph{terminating} if it is fair and finite.
\end{definition}

The result of any fair $\X$-derivation is a universal model of the KB, for $\X \in \set{\Ob, \SO, \R}$, and has a retraction to a universal model for $\X = \E$. Therefore, we obtain:

\begin{proposition}
	\label{proposition:chase-correctness}
	Consider a BCQ $\query$, a KB $\kb$, and some fair $\X$-derivation $\der$ from $\kb$ where $\X \in \set{\Ob, \SO, \R, \E}$.
	Then, $\kb \models \query$ iff $\res(\der) \models \query$.
\end{proposition}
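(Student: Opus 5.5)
The statement to prove is \zcref{proposition:chase-correctness}: for a fair $\X$-derivation $\der$ from $\kb = \pair{\rs, \inst}$, $\kb \models \query$ iff $\res(\der) \models \query$. The plan is to combine two facts: $\res(\der)$ is a model of $\kb$, and it maps homomorphically into every model of $\kb$. The claim then follows from the characterisation of entailment via homomorphisms recalled in the preliminaries.

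\emph{Soundness.} First I show that $\res(\der)$ maps homomorphically into every model $\model$ of $\kb$. I build homomorphisms $h_i : \inst_i \to \model$ by induction on $i$, each extending the previous one.
\begin{itemize}
\item For $i = 0$, $h_0$ is given by the definition of a model of $\inst$.
\item For step $i$ with trigger $t_i = \pair{\arule, \pi}$, the map $h_{i-1} \circ \pi$ is a homomorphism from the body of $\arule$ into $\model$. Since $\model$ satisfies $\arule$, it extends to the head, and I send each fresh variable $z_{t_i}$ to the image of $z$.
\end{itemize}
The union $h = \bigcup_i h_i$ is a homomorphism from $\res(\der)$ to $\model$. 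This step is uniform in $\X$, because every $\X$-derivation is a derivation.

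\emph{Completeness.} Next I show that $\res(\der)$ is a model of $\kb$, or, for $\X = \E$, that it maps homomorphically into one. It is trivially a model of $\inst$. Take a rule $\arule \in \rs$ and a homomorphism $\pi$ from its body into $\res(\der)$. The body is finite, so $\pi$ lands in some $\inst_i$, and $t = \pair{\arule, \pi}$ is a trigger on $\inst_i$. I argue by cases on $\X$.
\begin{itemize}
\item If $t$ is not $\X$-applicable on $\inst_i$, this directly gives a head extension in $\inst_i$:
  \begin{itemize}
  \item for $\Ob$, $\trigoutput(t) \subseteq \inst_i$;
  \item for $\SO$, some $\trigoutput(t') \subseteq \inst_i$ with the same frontier image, so the head extends $\pi$ on the frontier (non-frontier body variables do not occur in the head);
  \item for $\R$, a retraction onto $\inst_i$ fixes $\pi$'s image and maps the fresh nulls inside $\inst_i$.
  \end{itemize}
\item If $t$ is $\X$-applicable, fairness yields some $j > i$ at which $t$ is no longer applicable. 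The same argument applied at $\inst_j$ shows the head is satisfied in $\inst_j \subseteq \res(\der)$.
\end{itemize}
Hence $\res(\der)$ satisfies every rule, and for $\X \in \{\Ob, \SO, \R\}$ it is a universal model.

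\emph{The case $\X = \E$.} Non-applicability of $t$ only gives a homomorphism $\inst_j \cup \trigoutput(t) \to \inst_j$ that may move terms of $\inst_j$. So $\res(\der)$ itself need not satisfy the rules. This is the main obstacle. I would invoke the statement preceding the proposition: $\res(\der)$ has a retraction onto a universal model $U$, and $U \subseteq \res(\der)$. If I had to prove this, I would build it by a compactness argument, taking a limit of the homomorphisms $\inst_j \cup \trigoutput(t) \to \inst_j$ over the finitely-branching choices on a growing domain to get an endomorphism of $\res(\der)$ whose image is a model. Given this, $\res(\der)$ and $U$ are homomorphically equivalent, so they entail the same BCQs.

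\emph{Conclusion.} If $\kb \models \query$, then $\query$ holds in the model $\res(\der)$, or holds in $U$ and transfers to $\res(\der)$ via $U \subseteq \res(\der)$. Conversely, if $\query$ maps into $\res(\der)$, composing with the homomorphism from soundness shows that $\query$ maps into every model, so $\kb \models \query$.
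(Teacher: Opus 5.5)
Your proposal is correct and follows the same route as the paper. The paper justifies this proposition only by the preceding remark that results of fair $\Ob$-, $\SO$- and $\R$-derivations are universal models, while results of fair $\E$-derivations retract onto a universal model. You prove the first fact in detail (soundness by extending homomorphisms, model-hood by non-applicability plus fairness) and invoke the second exactly as the paper does.

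Your compactness sketch for the $\E$ retraction would need real care, since the homomorphisms witnessing non-applicability at different stages need not be compatible. For BCQs it can be bypassed: induct over a fair oblivious derivation, using fairness and the homomorphism $\inst_j \cup \trigoutput(t') \to \inst_j$ that witnesses non-$\E$-applicability. This shows directly that every finite fragment of a universal model maps into $\res(\der)$.
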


These chase variants induce abstract classes, whose relationship is summarized in \cite{KR2022-11}.
\begin{definition}[Chase-Terminating Sets]
	\label{definition:chase-terminating-sets}
	For an $\X \in \set{\Ob, \SO, \R, \E}$, let $\chaseterm{\X}{\forall}$ (resp. $\chaseterm{\X}{\exists}$) be the set of all rule sets $\rs$ such that every (resp. some) fair $\X$-derivation
	from every KB $\pair{\rs, \inst}$ is finite.
\end{definition}

\begin{proposition}\label{prop:terminating dependence}
	For all $\X \in \set{\Ob, \SO, \E}$, $\chaseterm{\X}{\forall}=\chaseterm{\X}{\exists}$, and
	$\chaseterm{\Ob}{\forall}\subset\chaseterm{\SO}{\forall}\subset\chaseterm{\R}{\forall}\subset\chaseterm{\R}{\exists}\subset\chaseterm{\E}{\forall}$.
\end{proposition}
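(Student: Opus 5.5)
The statement collects known facts about chase variants. I would prove it in three parts: first the equalities $\chaseterm{\X}{\forall}=\chaseterm{\X}{\exists}$, then the non-strict inclusions, and finally strictness by separating examples. Throughout I would rely on the summary of \cite{KR2022-11} for the finer combinatorial details.

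\emph{Equalities.} For $\Ob$ I would show that every fair $\Ob$-derivation from $\kb$ has the same result up to renaming of fresh variables. The argument goes by induction on the "trigger depth". Every trigger $t$ whose support can eventually be built is applied in any fair derivation, since $\Ob$-applicability is only ever lost by applying $t$ itself. Moreover, fresh variables are named canonically by $z_t$. Hence two fair derivations apply the same set of triggers, and in particular one is finite iff the other is.

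For $\SO$ the same argument works once triggers are identified modulo the image of the frontier. For $\E$ I would use the cited fact (footnote, \cite{DBLP:journals/tplp/DelivoriasLMU21}) that a fair $\E$-derivation is finite iff $\kb$ has a finite universal model. This condition does not depend on the derivation.

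\emph{Inclusions.} For $\chaseterm{\Ob}{\forall}\subseteq\chaseterm{\SO}{\forall}$, I would note that a fair $\SO$-derivation applies one representative trigger per class of triggers sharing a frontier image. Its result therefore embeds, via a renaming, into the $\Ob$ result, which is finite.

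For $\chaseterm{\SO}{\forall}\subseteq\chaseterm{\R}{\forall}$, the key observation is that every $\R$-applicable trigger is $\SO$-applicable. Indeed, if some $t'$ with the same frontier image had its output already present, mapping $z_t\mapsto z_{t'}$ would give a retraction. Thus any $\R$-derivation is an $\SO$-derivation, possibly not fair, and its atoms embed into the finite $\SO$ result. Hence the $\R$-derivation is finite.

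The inclusion $\chaseterm{\R}{\forall}\subseteq\chaseterm{\R}{\exists}$ is immediate, since every KB admits some fair $\R$-derivation. For $\chaseterm{\R}{\exists}\subseteq\chaseterm{\E}{\forall}$, I would take a terminating $\R$-derivation; by \zcref{proposition:chase-correctness}'s underlying fact, its result is a finite universal model. So $\E$ terminates by the characterisation above.

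\emph{Strictness.} I would give one rule set per gap:
\begin{itemize}
\item For $\Ob$ versus $\SO$, take $\apred(x,y)\to\exists z\,\apred(x,z)$. It is $\SO$-terminating, since there is one new atom per value of $x$, but $\Ob$ loops on $\apred(a,b)$.
\item For $\SO$ versus $\R$, take the rule of \zcref{example:preliminaries}. The restricted chase always stops because every created null carries a loop, which makes all later triggers non-applicable. The $\SO$ chase on $\apred(a,a)$ keeps creating nulls.
\item For the remaining two gaps, I would use rule sets where the restricted chase terminates only if Datalog rules are applied first. An example is an existential successor rule paired with a Datalog rule closing a loop. For $\R$ versus $\E$, I would use a rule set whose every restricted derivation is infinite even though a finite universal model exists.
\end{itemize}
I expect these last two separating examples to be the main obstacle. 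One must check that every fair derivation diverges, respectively that some do not, while the universal model remains finite. Here I would first try the standard examples from \cite{KR2022-11} and verify them directly.
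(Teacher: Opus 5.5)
The paper gives no proof of this proposition. It imports it from \cite{KR2022-11}, and afterwards uses only the non-strict inclusions: to lift termination of the oblivious chase on the critical instance to all variants, and to push non-termination of the $\E$-chase down to the others. Your reconstruction of the equalities and of the non-strict chain is sound and makes that part self-contained. It rests on three facts. First, the implications ``$\R$-applicable $\Rightarrow$ $\SO$-applicable $\Rightarrow$ $\Ob$-applicable'', where your retraction $z_t\mapsto z_{t'}$ is exactly the right witness. Second, uniqueness of the $\Ob$-result, and of the $\SO$-result up to renaming. Third, the finite-universal-model characterisation of the $\E$-chase.

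Two steps need tightening. An $\R$-derivation is an $\SO$-derivation that need not be fair, and finiteness only follows if its map into the fair $\SO$-result is \emph{injective}. Build this map by induction: send $z_t$ to the null of the unique trigger, applied by the fair derivation, with the same rule and the same (mapped) frontier image. Injectivity then follows because $\SO$-applicability forbids applying two triggers with the same rule and frontier image. The same construction between two fair $\SO$-derivations is what your $\SO$ equality needs. Also, ``two fair $\Ob$-derivations apply the same triggers'' is false for Datalog rules, whose output may be supplied by another trigger. What holds, and suffices, is that both results equal the least atom set containing $\inst$ and closed under the outputs of all triggers.

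What is genuinely missing is strictness at the last two gaps. For $\chaseterm{\R}{\exists}$ versus $\chaseterm{\E}{\forall}$ you only restate what a witness must satisfy. Two witnesses that work with the paper's definitions are as follows.

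For $\chaseterm{\R}{\forall}\subsetneq\chaseterm{\R}{\exists}$, take $\{\apred(x,y)\to\exists z\,\apred(y,z),\ \apred(x,y)\to\apred(y,y)\}$. Closing under the Datalog rule first puts a loop on every term with an incoming edge, so no existential trigger is ever $\R$-applicable. On the other hand, on $\apred(a,b)$, firing the existential rule on each new null before the Datalog rule loops it gives an infinite fair derivation.

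For $\chaseterm{\R}{\exists}\subsetneq\chaseterm{\E}{\forall}$, take $\{\apred(x,y)\to\exists z\,\apred(y,z),\ \apred(x,y)\to\apred(x,x)\}$. On $\apred(a,b)$, the first atom with first argument $b$ (then $z_1$, $z_2$, \dots) can only come from the existential rule, since the Datalog rule needs such an atom in its body. So fairness forces an infinite chain in every $\R$-derivation. Yet adding $\apred(t,t)$ for every term $t$ of an $\apred$-atom of $\inst$ yields a finite universal model for every instance, so every fair $\E$-derivation terminates.
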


\begin{example}
	Consider the KB $\kb = \pair{\rs, \inst}$ from \zcref{example:preliminaries}.
	All fair $\Ob$- or $\SO$-derivations from $\kb$ are infinite.
	The only fair $\R$ and $\E$-derivation from $\kb$ is $\der  = (\emptyset, \inst)$.
	Any fair $\R$-derivation from a KB with $\rs$ is finite and hence, $\rs \in \chaseterm{\R}{\forall}$ (and thus $\rs \in \chaseterm{\E}{\forall}$).
\end{example}

\paragraph{\bts}
The class \bts is defined from the notion of treewidth and contains $\chaseterm{\X}{Q}$ for all $\X\in\set{\Ob,\SO,\R,\E}$ and $Q\in\set{\exists,\forall}$ \cite{DBLP:journals/jair/CaliGK13}.
We only introduce a sufficient condition for a rule set not to be \bts.
Given a binary predicate $\apred$, the \emph{$\apred$-graph} of an instance $\inst$ is the directed graph $(\adom(\inst),E)$ where $(a,b)\in E$ if and only if $\apred(a,b) \in \inst$.
\begin{proposition}\label{prop:inf clique implies not bts}
	If there is an infinite clique in the $\apred$-graph of $\res(\der)$ for some binary predicate $\apred$ and derivation $\der$ from $\kb=\pair{\inst,\rs}$, then $\rs$ is not $\bts$.
\end{proposition}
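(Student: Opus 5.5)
The plan is to work directly from the definition of \bts in \cite{DBLP:journals/jair/CaliGK13}. There, $\rs$ is \bts if, for every instance $\inst$, the result of the (fair) oblivious chase from $\pair{\rs,\inst}$ has treewidth bounded by some constant that depends on $\inst$. The argument has two steps. First, the infinite clique forces $\res(\der)$ to have unbounded treewidth. Second, $\res(\der)$ sits inside the oblivious chase result, and treewidth is monotone under taking sub-atom-sets.

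\emph{Step 1 (unbounded treewidth).} Let $C \subseteq \adom(\res(\der))$ be an infinite clique in the $\apred$-graph. For every $n$, pick $n$ distinct elements of $C$. Every pair $a \neq b$ among them is linked by $\apred(a,b)$ or $\apred(b,a)$ in $\res(\der)$, so the Gaifman graph of $\res(\der)$ contains $K_n$. Its treewidth is therefore at least $n-1$. Since $n$ is arbitrary, $\res(\der)$, and in fact the finite prefixes $\inst_i$ of $\der$, have unbounded treewidth. For the prefixes, note that the finitely many atoms witnessing a given $K_n$ all appear by some finite step $i$.

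\emph{Step 2 (embedding into the oblivious chase).} Every trigger $t_i$ of $\der$ satisfies $\trigoutput(t_i)\not\subseteq\inst_{i-1}$, so it is $\Ob$-applicable. Hence $\der$ is already an $\Ob$-derivation. It can be extended, by interleaving all remaining $\Ob$-applicable triggers in a dovetailing order, to a fair $\Ob$-derivation $\der'$ with $\res(\der)\subseteq\res(\der')$. This uses the fact that fresh nulls $z_t$ are determined by the trigger $t$, and that triggers are never disabled in the oblivious chase except by being applied. Moreover, by the uniqueness of the nulls $z_t$, all fair oblivious derivations from $\pair{\rs,\inst}$ yield the same result, namely the union of outputs of all triggers reachable from $\inst$. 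So $\res(\der)$ is contained in \emph{the} oblivious chase result. Since adding atoms can only add edges to the Gaifman graph, treewidth is monotone, and the oblivious chase from $\pair{\rs,\inst}$ has unbounded (indeed infinite) treewidth. Hence $\rs$ is not \bts.

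The main obstacle is definitional. If \bts is instead phrased as ``some universal model of bounded treewidth exists'', Step 2 does not directly apply, since a homomorphism into a universal model may collapse the clique. I would then invoke the equivalence from \cite{DBLP:journals/jair/CaliGK13}, or the setting of \cite{KR2022-11}, that \bts is characterised via the oblivious chase result. Failing that, I would argue that the clique elements are pairwise distinguished in any model, but that requires extra structure, so I would rely on the chase-based definition.
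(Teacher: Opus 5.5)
The paper gives no proof of this proposition; it states it as an immediate consequence of the definition of \bts. Under the reading you adopt, your argument is correct and complete. That reading is: for every instance, the oblivious chase result has finite treewidth. It is equivalent to asking this of every derivation, since the paper's derivations are exactly oblivious-style sequences and each has its result inside the oblivious chase result. Step~1 is the standard fact that a graph containing $K_n$ has treewidth at least $n-1$. Step~2 supplies the link, left implicit in the paper, between an arbitrary derivation $\der$ and the canonical oblivious chase result.

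Two small points on Step~2. First, you do not need to construct $\der'$. An induction on $i$ shows each $\inst_i$ lies in the least fixpoint of the trigger-output operator with canonical null names, and that fixpoint is the result of every fair $\Ob$-derivation. Second, ``triggers are never disabled except by being applied'' is false for Datalog triggers, whose output may be produced by a different trigger. This is harmless here, because the output is then present anyway.

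The real problem is your fallback paragraph. No equivalence exists that would transfer the argument to a universal-model definition of \bts; under that reading the proposition, stated for an arbitrary derivation, is simply false. Take $\rs'=\{\Predicate{A}(x,y)\to\exists z~\Predicate{A}(x,z),\ \Predicate{A}(x,y)\wedge\Predicate{A}(x,y')\to\Predicate{B}(y,y')\}$ and $\inst'=\{\Predicate{A}(a,b)\}$.
\begin{itemize}
\item Repeatedly apply the first rule with $y$ mapped to the latest null. Each step adds a fresh null, so this is a derivation. Interleaving the second rule gives an infinite clique in the $\Predicate{B}$-graph of the result.
\item Nulls never become first arguments of $\Predicate{A}$. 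Hence every semi-oblivious derivation from any KB with $\rs'$ is finite, so $\rs'\in\chaseterm{\SO}{\forall}$ and every instance has a finite universal model.
\end{itemize}

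The same example shows that the paper's remark that \bts contains $\chaseterm{\X}{Q}$ for $\X\neq\Ob$ cannot hold together with this proposition. You should therefore commit explicitly to the oblivious (all-derivations) definition rather than present it as one option among several.

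That is also the reading under which everything the paper needs about \bts goes through. When $\threecm$ halts, the oblivious chase terminates on all instances, so the result is finite. When $\threecm$ does not halt, the clique lemma combined with this proposition applies.

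Your instinct that ``extra structure'' is otherwise required is right. For $\rsReduc$ specifically, every endomorphism of the chase result fixes pointwise the $\pSuccz$-chain leading into the unique $\pEnd$-element. Hence every universal model contains an injective copy of the $\pR_0$-clique. Nothing of the sort is available for the general statement.
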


\subsection{Three-counter Machines}

A \emph{three-counter machine} (3CM) $\threecm$ is essentially a two-counter automaton augmented with a strictly increasing time counter. Formally, it is a tuple $\pair{Q, q_1, \delta}$ where $Q$ is a finite set of states including the \emph{initial state} $q_1$, and $\delta:Q\times\set{0,1}^2\to Q\times\set{-1,0,1}^2$ is a transition function.
A \emph{configuration} of $\threecm$ is a tuple $\pair{q, v_1, v_2, t}$ where $q \in Q$ is the current state and $v_1, v_2, t \in \mathbb{N}$ are the values of the three counters.
The \emph{initial configuration} of $\threecm$ is $\pair{q_1, 0, 0, 0}$.
Given a configuration $C=\pair{q, v_1, v_2, t}$, we define $\nextconfig(C)=\pair{q', v_1', v_2', t+1}$ where $\delta(q, b_1, b_2) = \pair{q', d_1, d_2}$ where $b_i = 0$ if $v_i = 0$ and $b_i = 1$ otherwise, and $v_i' = \max(v_i + d_i, 0)$ for $i \in \set{1,2}$.
Note that $\nextconfig$ is a partial function since $\delta$ may be undefined for some inputs.
A 3CM $\threecm$ \emph{halts} if there is a finite sequence of configurations $C_0, \ldots, C_n$ such that $C_0$ is the initial configuration, $C_{i+1} = \nextconfig(C_i)$ for every $0 \leq i < n$, and $\nextconfig(C_n)$ is undefined.
The halting problem for 3CMs is undecidable \cite{Minsky1967}.
\section{The Rule Set}
\label{sec:ruleset}

\begin{figure*}[ht]
	\begin{minipage}[t]{0.49\linewidth}
		\begin{align}
			\pSuccz(x,y) &\to \pSucc(x,y) \Rulename{R_{\pSucc}}\label{rule:successor}\\
			\pR_i(x,y), \pSucc(y,z) &\to \pR_{i+1}(x,z) \Rulename{R_{\pR, i}}\label{rule:count}\\
			\pT_i(x,y,z), \pSucc^{r_i}(y,y'), \pSucc^{q_i}(z,z') &\to \pT_i(x,y',z') \Rulename{R_{\pT, i}}\label{rule:mult}\\
			\pSucc(x,y) &\to \pFlood(y) \Rulename{R_\mathsf{flood}^\mathsf{prop}}\label{rule:flood propagate}
		\end{align}
	\end{minipage}
	\hfill
	\begin{minipage}[t]{0.49\linewidth}
		\begin{align}
			\pEnd(x) &\to \pSucc(x,x) \Rulename{R_{\pSucc}^\pEnd}\label{rule:successor end}\\
			\pSucc^2(x,y) &\to \pG(x,y) \Rulename{R_{\pG}^\mathsf{init}}\label{rule:G init}\\
			\pG(x,y), \pR_i(x,y), \pT_i(x,y,z) &\to \pG(x,z) \Rulename{R_{\pG, i}^\mathsf{step}}\label{rule:G step}
		\end{align}
	\end{minipage}
	\vspace{-0.2cm}
	\begin{equation}
		\pFlood(x), \pFlood(y), \pFlood(z) \to \pG(x,y), \bigwedge_{j=0}^{p-1} \pR_j(x,y), \pT_j(x,y,z) \Rulename{R_\mathsf{flood}^\mathsf{gen}}\label{rule:flood generate}
	\end{equation}
	\vspace{-0.3cm}
	\begin{equation}
		\pG(y,z), \pEnd(z) \to \exists x~\pSuccz(x,y),\pR_0(x,x),\bigwedge_{j=0}^{p-1} \pT_j(x,x,x) \Rulename{R_\exists}\label{rule:existential}
	\end{equation}
	\vspace{-0.4cm}
	\caption{Rules in $\rsReduc$ for all $i<p$. In Rule~\ref{rule:count}, $\pR_{p}$ is interpreted as $\pR_{0}$. $\pSucc^{n}(x,y)$ denotes an $\pSucc$-path of length $n$ from $x$ to $y$.}
	\label{fig:ruleset}
\end{figure*}

We define the class $\Cmc=\set{\rsReduc}[\threecm\text{ a 3CM}]$, using the rule set $\rsReduc$ given in \zcref{fig:ruleset}. As it is recognizable in linear time, it is indeed a concrete class.
The rule set $\rsReduc$ is a adapted from the one used to prove Theorem 1 in \cite{DBLP:conf/icalp/GogaczM14}, stating that all-instance chase termination is undecidable for the oblivious chase.
It provides a reduction from the halting problem for 3CMs to chase termination.

Roughly speaking, the chase on KB $\pair{\set{\pEnd(w)}, \rsReduc}$ looks like a sequence of elements connected through a binary predicate $\pSucc$, ending in the ``well of positivity'' $w$. 
Each element in the chase runs its own private computation of $\threecm$, and is only allowed to generate a predecessor if the simulation of $\threecm$ goes on for long enough. Consequently, this chain of elements is infinite if and only if $\threecm$ does not halt. 
To elaborate, we first need to explain how $\threecm$ can be simulated by a sequence of natural numbers.

Following Section~3.3 in \cite{DBLP:conf/icalp/GogaczM14}, there is a way to encode configurations of a 3CM $\threecm$ as natural numbers using prime factorization.
Let $p_1, p_2, \ldots$ be the sequence of prime numbers, and $Q=\set{q_1,\dots,q_m}$.
Then, a configuration $\pair{q_i, v_1, v_2, t}$ of $\threecm$ is encoded as the natural number
\[\enc(q_i, v_1, v_2, t) = p_i \cdot p_{m+1}^{v_1} \cdot p_{m+2}^{v_2} \cdot p_{m+3}^{t}.\]
In particular, the initial configuration $\pair{q_1, 0, 0, 0}$ is encoded as $2 = p_1$.
Let $p=p_1\dots p_{m+3}$.
\begin{proposition}\label{there are numbers to go from a config to the next}
	There exist natural numbers $q_i, r_i$ for all $1 \leq i \leq p$ such that $\frac{q_i}{r_i}$ is irreducible, and for any configuration $C$ of $\threecm$ such that $i=\enc(C) \bmod p$, if $\nextconfig(C)$ is defined, then ${\enc(\nextconfig(C)) = \frac{q_i}{r_i} \cdot \enc(C)}$,
	and ${\frac{q_i}{r_i}\cdot\enc(C)=\enc(C)}$ otherwise.
\end{proposition}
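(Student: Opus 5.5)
The plan is to use the fact that the residue $\enc(C) \bmod p$ already determines everything that $\nextconfig$ depends on. Since $p = p_1 \cdots p_{m+3}$, the residue $i = \enc(C) \bmod p$ determines $\gcd(\enc(C), p)$, and hence which of $p_1, \dots, p_{m+3}$ divide $\enc(C)$. For a configuration $C = \pair{q_j, v_1, v_2, t}$, exactly one state prime divides $\enc(C)$, namely $p_j$. Moreover $p_{m+1} \mid \enc(C)$ iff $v_1 > 0$, and $p_{m+2} \mid \enc(C)$ iff $v_2 > 0$. So from $i$ alone we recover $j$, $b_1$ and $b_2$, which is all that $\delta$ reads.

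The construction of $q_i, r_i$ then goes as follows.
\begin{enumerate}
\item First I will treat the indexing: residue $0$ is read as index $p$. Residue $0$ cannot occur for an encoding, since $\enc(C)$ is divisible by only one state prime.
\item If $i$ is not the residue of any configuration encoding, or if $\delta(q_j, b_1, b_2)$ is undefined, I set $q_i = r_i = 1$. Then $\frac{q_i}{r_i}\cdot\enc(C) = \enc(C)$, as the statement requires in the ``otherwise'' case.
\item If $\delta(q_j, b_1, b_2) = \pair{q_{j'}, d_1, d_2}$, I first replace $d_k$ by $d'_k = 0$ whenever $b_k = 0$ and $d_k = -1$, and keep $d'_k = d_k$ otherwise. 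This makes $\max(v_k + d_k, 0) = v_k + d'_k$, with $d'_k$ depending only on $i$.
\item I then take the fraction
\[\frac{p_{j'}}{p_j}\cdot p_{m+1}^{d'_1}\cdot p_{m+2}^{d'_2}\cdot p_{m+3}\]
and let $q_i/r_i$ be its reduced form, collecting positive prime exponents into $q_i$ and negative ones into $r_i$.
\item Finally, multiplying $\enc(C)$ by this fraction replaces the factor $p_j$ by $p_{j'}$, changes the exponent of $p_{m+k}$ to $v_k + d'_k = v'_k$, and increments $t$. The result is an integer equal to $\enc(\nextconfig(C))$.
\end{enumerate}

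Irreducibility is automatic in step~4: after cancelling when $j = j'$, the numerator and denominator are products of disjoint sets of primes. Integrality in step~5 uses $p_j \mid \enc(C)$, and, when $d'_k = -1$, uses $b_k = 1$, so that $p_{m+k} \mid \enc(C)$.

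I expect the main subtlety to be exactly the decrement-at-zero case. It has to be absorbed by using the test bit, which is itself readable from the residue, to zero out the decrement. This keeps the fraction dependent only on $i$ and not on the actual counter values. Everything else is bookkeeping of prime exponents.
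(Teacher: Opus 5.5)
Your proposal is correct and is essentially the standard Conway-style construction that the paper does not reprove but delegates to Theorem~5 of Gogacz and Marcinkowski: the residue modulo $p$ determines the state prime and both zero-test bits, and the reduced fraction, with the decrement-at-zero neutralised, does the rest. One harmless slip: your remark that residue $0$ never occurs fails for $m=1$, since there $\enc(q_1,1,1,1)=p$; nothing depends on it, because your gcd argument and steps 2--4 treat index $p$ like any other residue.
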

For further details, we refer the reader to Theorem~5 in \cite{DBLP:conf/icalp/GogaczM14}. Let $\nextenc$ be the function defined as $\nextenc(n) = \frac{q_{n \bmod p}}{r_{n \bmod p}} \cdot n$. Then, $\nextconfig(C)$ is defined if and only if $\nextenc(\enc(C)) \neq \enc(C)$, and in that case $\enc(\nextconfig(C)) = \nextenc(\enc(C))$. Let $\encSet=\set{\nextenc^i(2)}[i \in \nat]$. Since the third counter always increases when a subsequent configuration exists, we get the following.
\begin{proposition}\label{prop:M halts iff G bdd}
	$\threecm$ halts \emph{iff} $\encSet$ is bounded.
\end{proposition}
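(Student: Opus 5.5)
We prove the two directions of \zcref{prop:M halts iff G bdd} separately, using the correspondence between runs of $\threecm$ and iterates of $\nextenc$ starting from $2=\enc(q_1,0,0,0)$. Write $C_0$ for the initial configuration and set $C_{k+1}=\nextconfig(C_k)$ for as long as this is defined. By induction on $k$, using \zcref{there are numbers to go from a config to the next} together with the remark that $\nextconfig(C)$ is defined iff $\nextenc(\enc(C))\neq\enc(C)$, we obtain $\nextenc^k(2)=\enc(C_k)$ for every $k$ such that $C_k$ is defined.

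Suppose first that $\threecm$ halts. Then there is a sequence $C_0,\dots,C_n$ with $\nextconfig(C_n)$ undefined, so $\nextenc(\enc(C_n))=\enc(C_n)$. By induction, $\nextenc^k(2)=\enc(C_n)$ for every $k\ge n$. Hence $\encSet=\set{\enc(C_0),\dots,\enc(C_n)}$ is finite, and therefore bounded.

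Conversely, suppose $\threecm$ does not halt. Then, since $\nextconfig$ is deterministic, $C_k$ is defined for every $k$. Because the time counter increases by one at each step, $C_k$ has time counter value $t=k$, and hence $\enc(C_k)=p_i\,p_{m+1}^{v_1}p_{m+2}^{v_2}p_{m+3}^{k}\ge p_{m+3}^{k}\ge 2^k$. It follows that $\encSet\supseteq\set{\enc(C_k)}[k\in\nat]$ is unbounded. By contraposition, if $\encSet$ is bounded then $\threecm$ halts.

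The only delicate point is making sure the fixpoint case of \zcref{there are numbers to go from a config to the next} applies exactly when $\nextconfig$ is undefined. This is precisely what the stated equivalence gives, so it needs no further work.
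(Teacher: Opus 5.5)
Your proof is correct and follows the paper's approach. The paper justifies the proposition in a single line, by observing that the time counter strictly increases whenever a next configuration exists. You spell out the same argument in detail: the induction $\nextenc^k(2)=\enc(C_k)$, the fixpoint $\nextenc(\enc(C_n))=\enc(C_n)$ that makes $\encSet$ finite when $\threecm$ halts, and the bound $\enc(C_k)\ge p_{m+3}^k\ge 2^k$ that makes $\encSet$ unbounded when it does not.
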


Then, as mentioned earlier, the chase is shaped as a sequence of elements that form a chain connected by predicates $\pSucc$, where $\pSucc(x,y)$ means that $y$ thinks it is the successor of $x$. The use of ``thinks'' is intentional, as we run a simulation of $\threecm$ from the point of view of each element. 
All predicates used for the simulation ($\pG$, $\pR_i$ and $\pT_i$) feature this element in first position, which thinks itself as $0$. 
The atom $\pG(x,y)$ signifies that from the point of view of $x$, the number represented by $y$ belongs to $\encSet$. This relation is computed according to \zcref{there are numbers to go from a config to the next}, using Rules~\ref{rule:G init} and \ref{rule:G step}. Specifically, the initial configuration is encoded by 2, represented by the second successor of each element (Rule~\ref{rule:G init}). Rule~\ref{rule:G step} then implements the transition to the next configuration using predicates $\pR_i$ and $\pT_i$. The atom $\pR_i(x,y)$ indicates that, from $x$'s perspective, the number represented by $y$ has a remainder of $i$ modulo $p$, as generated by Rule~\ref{rule:count}. Similarly, $\pT_i(x,y,z)$ implies that $\frac{y}{z}=\frac{r_i}{q_i}$ from $x$'s perspective. This is ensured by Rule~\ref{rule:mult}, which implements multiplication by successive additions. These predicates are initialized for each element $x$ by Rule~\ref{rule:existential}. This rule allows an element $y$ to create a predecessor $x$ (specifically $\pSuccz(x,y)$, which implies $\pSucc(x,y)$ via Rule~\ref{rule:successor}) if the simulation of $\threecm$ from $y$ reaches the well of positivity $w$ (i.e., when $\pG(y,z)$ and $\pEnd(z)$ hold for some $z$). This process is illustrated in \zcref{fig:fig4-rule-set-expl} for predicate $\pT_i$.

The reduction described so far mirrors the one in \cite{DBLP:conf/icalp/GogaczM14}. However, the same reduction does not work here. First, it applies only to the oblivious chase; the restricted and equivalent chases always terminate, even if $\threecm$ does not halt. Second, arguing that BCQ entailment is decidable is difficult (and likely false) because the predicates $\pR_i$, $\pT_i$, and $\pG$ lack structure. We introduce two modifications to address these issues.

First, to accommodate all chase variants, we introduce predicates $\pSuccz$ and $\pEnd$, which derive $\pSucc$ via Rules~\ref{rule:successor} and \ref{rule:successor end}. In particular, $\pSuccz$ has no self-loop over the well of positivity $w$ (where $\pEnd(w)$ holds, at the end of the chain), preventing any attempt at homomorphically embedding the whole chain into $w$. Thus, even the equivalent chase is forced to create an infinite chain if $\threecm$ does not halt.

Second, to make BCQ entailment decidable, we employ a flooding mechanism using the predicate $\pFlood$, with Rules~\ref{rule:flood generate} and \ref{rule:flood propagate}. This ensures that once the simulation from $x$ reaches $w$ and generates a predecessor, $x$ connects to all its successors via $\pR_i$, $\pT_i$, and $\pG$. As a result, all elements whose simulation reaches the well of positivity show the same behavior regarding BCQ answering, except for predicates $\pSuccz$, $\pSucc$ and $\pEnd$ which are easier to handle.

{
\pgfmathsetmacro{\nodeSep}{1.5} 

\begin{figure*}
\begin{center}
\begin{tikzpicture}[
	boxedNode/.style={draw, rectangle, minimum size=0.6cm, rounded corners},
   	Spred/.style={-stealth, thick},
   	partTpred/.style={thick, purple},
   	Tpred/.style={partTpred, -Stealth},
   	predLabel/.style={fill=white, rectangle, inner sep=1pt, semithick},
   	xscale=1.5
]
\node [boxedNode] (t0) at (0,0) {$t_0$};
\node [boxedNode] (t1) at (\nodeSep,0) {$t_1$};
\node [boxedNode] (t2) at (2*\nodeSep,0) {$t_2$};
\node [boxedNode] (t3) at (3*\nodeSep,0) {$t_3$};
\node [boxedNode] (t4) at (4*\nodeSep,0) {$t_4$};
\node [boxedNode] (w) at (5*\nodeSep,0) {$w$};

\node [] (wEnd) at (5.2*\nodeSep,-0.3*\nodeSep) {$\pEnd$};

\path [Spred] (t0) edge node[predLabel, pos=0.4] {\small$\pSucc$} (t1);
\path [Spred] (t1) edge node[predLabel, pos=0.4] {\small$\pSucc$} (t2);
\path [Spred] (t2) edge node[predLabel, pos=0.4] {\small$\pSucc$} (t3);
\path [Spred] (t3) edge node[predLabel, pos=0.4] {\small$\pSucc$} (t4);
\path [Spred] (t4) edge node[predLabel, pos=0.4] {\small$\pSucc$} (w);
\path [Spred, out=30, in=330, looseness=5] (w) edge node[predLabel] {$\pSucc$} (w);

\path [partTpred, out=140, in=180, looseness=10] (t0) edge node[above] {\small Step 0} (t0);
\path [Tpred, out=180, in=220, looseness=10] (t0) edge node[predLabel, pos=0.55] {$\pT_i$} (t0);
\path [partTpred, out=45, in=90, looseness=0.7] (t0) edge node[above, pos=.65] {\small Step 1} (t2);
\path [Tpred, out=90, in=90] (t2) edge node[predLabel] {$\pT_i$} (t3);
\path [partTpred, out=315, in=270, looseness=0.5] (t0) edge node[pos=.15, below] {\small Step 2} (t4);
\path [Tpred, out=270, in=270] (t4) edge node[predLabel] {$\pT_i$} (w);
\path [partTpred, out=65, in=115, looseness=0.6] (t0) edge node[pos=.15, above] {\small Step 3} (w);
\path [Tpred, out=115, in=65, looseness=10] (w) edge node[predLabel] {$\pT_i$} (w);

\end{tikzpicture}
\end{center}

 \caption{Representation of predicates $\pSuccz$, $\pEnd$ and atoms with shape $\pT_i(t_0,\_,\_)$ in $\stepXend{5}$ for some $i$ such that $\frac{r_i}{q_i}=\frac{2}{3}$. The $\pSuccz$-atoms, as described in \zcref{lem:chase is a chain}, form a chain ending in $w$, the only term with $\pEnd(w)$. Then, $\pT_i$-atoms are as described by \zcref{lem:the predicates compute at least their semantics}: First, $\pT_i(t_0,t_0,t_0)$ is created by the application of Rule~\eqref{rule:existential} that creates $t_0$ (step 0). Then, $\pT_i(t_0,t_2,t_3)$, $\pT_i(t_0,t_4,t_6)$, and $\pT_i(t_0,t_6,t_9)$ (note that $w=t_6=t_9$) are created by successive applications of Rule~\eqref{rule:mult} (steps 1, 2 and 3, respectively).}
 \label{fig:fig4-rule-set-expl}
\end{figure*}
}


\section{Undecidability of Chase Termination and \bts-Recognition}
\label{sec:chase-termination}

This section is devoted to Points~1 and 2 of \zcref{theorem:ultimate-goal}, which are the negative properties of our class: chase termination for all chase variants and \bts are both undecidable in $\Cmc$.
Regarding Oblivious and Semi-Oblivious chase termination, this is mostly guaranteed \emph{by construction}, as we adapted the class proposed in \cite{DBLP:conf/icalp/GogaczM14} that already had these properties. Remarkably, our set of rules also exhibits this behavior for the equivalent chase. To prove this, we make use of \zcref{prop:terminating dependence} as much as possible. More precisely, we show that if $\threecm$ does not halt, then the equivalent chase does not terminate on $\rsReduc$ and some specific instance $\IEnd$ (\zcref{thm:if G ubdd then chase does not terminate}), and that if $\threecm$ does halt, then the oblivious chase terminates on the so-called critical instance, and thus on all instances (\zcref{thm:if G bdd then chase terminates}). These two results, along with \zcref{prop:terminating dependence} and the undecidability of the halting problem for 3CMs entail that chase termination is undecidable for $\Cmc$.
The undecidable \bts recognition is then a consequence of the above, joint with our freshly added flooding mechanism.

In this section, $\X$ always denotes an element of $\set{\Ob,\SO,\R,\E}$. We denote the result of some fair $\X$-derivation from $\inst,\rs$ with $\chase^\X(\inst,\rs)$. Note that this object is not unique in general, as different derivations may produce different results. This is however not an issue for our purpose, since we do not really consider the restricted chase in proofs, and it is the only variant for which the order of application of triggers matters.
It will also be convenient to consider the chase as an alternation of closing under Datalog rules and then under non-Datalog rules.
We define $\stepX{0}(\inst)=\chase^\X(\inst,\DatS{\rsReduc})$ and for all $i \geq 0$, $\stepX{i+1}(\inst)=\chase^\X(\chase^\X(\stepX{i}(\inst),\NDatS{\rsReduc}), \DatS{\rsReduc})$, where $\DatS{\rs}$ and $\NDatS{\rs}$ denote the sets of Datalog and non-Datalog rules in $\rs$, respectively.
Note that this process still induces a fair derivation.

\subsection{Non-Termination of the Chase}
\label{sec:non-termination}

We first prove that if $\threecm$ does not halt, then $\rsReduc\notin\chaseterm{\X}{Q}$ for $Q\in\set{\forall, \exists}$, by considering only the instance $\IEnd=\set{\pEnd(w)}$. Before proving this, we need two lemmas about the structure of the chase.

\begin{lemmarep}\label{lem:chase is a chain}
	For all $n>0$, if $\stepXend{n}\neq\stepXend{n-1}$, then $\adom(\stepXend{n})=\set{t_0, \dots, t_{n-1}, w}$ for some terms $t_0, \dots, t_{n-1}$, and the $\pSuccz$-atoms of $\stepXend{n}$ are exactly $\set{\pSuccz(t_i,t_{i+1})}[0 \leq i < n]$ where $t_n=w$.
\end{lemmarep}
\begin{proofsketch}
	By induction on $n$. Both the base case and inductive step follow from the fact that there is a single existential rule, and thus the only way for $\stepXend{n}$ to be different from $\stepXend{n-1}$ is for Rule~\ref{rule:existential} to be applied. This rule features a single frontier variable, which has to be mapped to the term created at the previous step.
\end{proofsketch}
\begin{proof}
	By induction on $n$. For the base case ($n=1$), notice that $\stepXend{0}=\set{\pEnd(w),\pSucc(w,w),\pFlood(w),\pG(w,w)}\cup\set{\pR_i(w,w),\pT_i(w,w,w)}[i<p]$ by Rules~\ref{rule:successor end}, \ref{rule:flood propagate} and \ref{rule:flood generate}. Thus, if $\stepXend{1}\neq\stepXend{0}$, Rule~\ref{rule:existential} (the only existential rule) must have been applied, and created $\pSuccz(t_0,w)$ for some fresh term $t_0$, so $\adom(\stepXend{1})=\set{t_0,w}$. Since no Datalog rule can yield $\pSuccz$-atoms, the only $\pSuccz$-atom is $\pSuccz(t_0,w)$, as required.
	
	For the inductive step, assume that the result holds for some $n\geq 1$ and that $\stepXend{n+1}\neq\stepXend{n}$. By the induction hypothesis, there are terms $t'_0, \dots, t'_{n-1}$ such that $\adom(\stepXend{n})=\set{t'_0, \dots, t'_{n-1}, w}$ and the $\pSuccz$-atoms of $\stepXend{n}$ are exactly $\set{\pSuccz(t'_i,t'_{i+1})}[0 \leq i < n]$.

	As there is a single existential rule in $\rsReduc$, all terms $t'_i$ have been generated by Rule~\ref{rule:existential} where $y$ is mapped to $t'_{i+1}$ (and to $w$ for $t'_{n-1}$). Thus, the only way for $\stepXend{n+1}$ to differ from $\stepXend{n}$ is for Rule~\ref{rule:existential} to be applied by mapping $y$ to $t'_0$, creating a new term $t_0$ and the atom $\pSuccz(t_0,t'_0)$. Let $t_i=t'_{i-1}$ for all $1\leq i \leq n$, so that $\adom(\stepXend{n})=\set{t_1, \dots, t_n, w}$.
	
	Since no Datalog rule can yield $\pSuccz$-atoms, the only $\pSuccz$-atoms of $\stepXend{n+1}$ are exactly $\set{\pSuccz(t_i,t_{i+1})}[0 \leq i < n+1]$, as required.
\end{proof}

In the following lemma, proven by a simple induction, we refer to terms of $\stepXend{n}$ as $t_0, \dots, t_{n-1}$ as in \zcref{lem:chase is a chain}. We also use $t_i$ for $i\geq n$ to refer to $w$. Note that even for such $i$'s, $\pSucc(t_i,t_{i+1})\in\stepXend{n}$.

\begin{lemmarep}\label{lem:the predicates compute at least their semantics}
	For all $n>0$, if $\stepXend{n}\neq\stepXend{n-1}$,
	\begin{enumerate}[label=(\roman*)]
		\item $\pR_i(t_0,t_k)\in\stepXend{n}$ if and only if $k=i\mod p$; \label{lemitem:semantics:R}
		\item $\pT_i(t_0,t_k,t_l)\in\stepXend{n}$ if and only if there is $m$ such that $k=mr_i$ and $l=mq_i$; \label{lemitem:semantics:T}
		\item For all $k\in\nat$, $\pG(t_0,t_{g^k(2)})\in\stepXend{n}$; \label{lemitem:semantics:G}
		\item If $\encSet$ is bounded by $n-1$, then for all $l$ such that $\pG(t_0,t_l)\in\stepXend{n}$, there is $k$ such that $l = g^k(2)$. \label{lemitem:semantics:G only}
	\end{enumerate}
\end{lemmarep}
\begin{proof}
	For \zcref{lemitem:semantics:R}, we proceed by induction on $k$. Initially, when $t_0$ is created by Rule~\ref{rule:existential}, there is only $\pR_0(t_0,t_0)$. Then, $\pFlood(t_0)\notin\stepXend{n}$ by \zcref{lem:chase is a chain} and the fact that only Rule~\ref{rule:flood propagate} can create $\pFlood$-atoms. Thus, only Rule~\ref{rule:count} can generate new $\pR_i$-atoms with $t_0$ as first argument. By induction, assume that $k-1=i\mod p$, and thus that $\pR_i(t_0,t_{k-1})\in\stepXend{n}$, for some $k\in\nat$. Then, since $\pSucc(t_{k-1},t_k)\in\stepXend{n}$, Rule~\ref{rule:count} creates $\pR_{(i+1)\mod p}(t_0,t_{k+1})$, as required. Since no other rule can create $\pR_i$-atoms with $t_0$ as first argument, \zcref{lemitem:semantics:R} holds.

	For \zcref{lemitem:semantics:T}, we proceed by induction on $m$. Initially, when $t_0$ is created by Rule~\ref{rule:existential}, there is only $\pT_i(t_0,t_0,t_0)$ for all $i<p$. Then, as before, $\pFlood(t_0)\notin\stepXend{n}$, so only Rule~\ref{rule:mult} can generate new $\pT_i$-atoms with $t_0$ as first argument. By induction, assume that $k=mr_i$ and $l=mq_i$, for some $i<p$ and $k,l$. Then, since $\pSucc^{r_i}(t_k,t_{k+r_i})$ and $\pSucc^{q_i}(t_l,t_{l+q_i})$ belong to $\stepXend{n}$, Rule~\ref{rule:mult} creates $\pT_i(t_0,t_{k+r_i},t_{l+q_i})$. As $k+r_i=(m+1)r_i$ and $l+q_i=(m+1)q_i$, and no other rule can create $\pT_i$-atoms with $t_0$ as first argument, \zcref{lemitem:semantics:T} holds.

	For \zcref{lemitem:semantics:G}, we proceed by induction on $k$. For the base case ($k=0$), Rule~\ref{rule:G init} creates $\pG(t_0,t_{g^0(2)})=\pG(t_0,t_2)$. For the inductive step, assume that the result holds for some $k\in\nat$ and let $l=g^k(2)$. By the induction hypothesis, $\pG(t_0,t_l)\in\stepXend{n}$. Then, let $l'=g^{k+1}(2)=g(l)=l\frac{q_i}{r_i}$ for $i=l\mod p$. By Item~(i) above, $\pR_i(t_0,t_l)\in\stepXend{n}$. Then, since $\frac{l}{l'}=\frac{r_i}{q_i}$, there is some $m$ such that $l=mr_i$ and $l'=mq_i$. Thus, by Item~(ii) above, $\pT_i(t_0,t_l,t_{l'})\in\stepXend{n}$. Thus, by Rule~\ref{rule:G step}, $\pG(t_0,t_{l'})=\pG(t_0,t_{g^{k+1}(2)})\in\stepXend{n}$, as required.

	For \zcref{lemitem:semantics:G only}, we prove that if $\encSet$ is bounded by $n-1$, then the only $\pG$-atom that can be generated by Rule~\ref{rule:G step} by mapping $y$ to $t_{g^k(2)}$ is $\pG(t_0,t_{g^{k+1}(2)})$. Since $\encSet$ is bounded by $n-1$, by Item~(i), the only $i$ such that $\pR_i(t_0,t_{g^k(2)})\in\stepXend{n}$ is $i=g^k(2)\mod p$. Then, by Item~(ii), the only $\pT_i$-atom that features $t_{g^k(2)}$ in second position is $\pT_i(t_0,t_{g^k(2)},t_{g^{k+1}(2)})$. Thus, the only $\pG$-atom that can be generated by Rule~\ref{rule:G step} by mapping $y$ to $t_{g^k(2)}$ is $\pG(t_0,t_{g^{k+1}(2)})$, as required. Thus, by induction on $k$, the only $\pG$-atoms with $t_0$ as first argument are those of the form $\pG(t_0,t_{g^k(2)})$, as required.
\end{proof}

\begin{theoremrep}\label{thm:if G ubdd then chase does not terminate}
	If $\threecm$ does not halt, there is no terminating $\X$-chase sequence from $\pair{\set{\pEnd(w)},\rsReduc}$.
\end{theoremrep}%
\begin{proofsketch}
	We simply prove by induction on $n$ that if $\threecm$ does not halt and $\stepXend{n}\neq\stepXend{n-1}$, then Rule~\ref{rule:existential} is $\E$-applicable on $\stepXend{n}$. In this case, $\stepXend{n}$ is as described in \zcref{lem:chase is a chain}. Then, by \zcref{prop:M halts iff G bdd}, $\encSet$ is unbounded, so $\pG(t_0,w)\in\stepXend{n}$ by \zcref{lemitem:semantics:G only} of \zcref{lem:the predicates compute at least their semantics}. As the $\pSuccz$-atoms of $\stepXend{n}$ are exactly $\set{\pSuccz(t_i,t_{i+1})}[0 \leq i < n]$, there is no way to homomorphically map the result of applying \ref{rule:existential} into $\stepXend{n}$, so this rule is indeed $\E$-applicable.
\end{proofsketch}
\begin{proof}[Proof of \zcref{thm:if G ubdd then chase does not terminate}]
	We prove by induction on $n$ that if $\encSet$ is unbounded (which coincides with $\threecm$ not halting by \zcref{prop:M halts iff G bdd}), then Rule~\ref{rule:existential} is $\E$-applicable (and thus $\X$-applicable for all variants $\X$) on $\stepXend{n}$.

	For the base case, note that $\stepXend{0}=\set{\pEnd(w),\pSucc(w,w),\pFlood(w),\pG(w,w)}\cup\set{\pR_i(w,w),\pT_i(w,w,w)}[i<p]$, as highlighted in the proof of \zcref{lem:chase is a chain}. Thus, Rule~\ref{rule:existential} is applicable on $\stepXend{0}$. Since there is no $\pSuccz$-atom in $\stepXend{0}$, there is no way to homomorphically map the fresh term created by Rule~\ref{rule:existential} to any existing term, so the rule is indeed $\E$-applicable.

	For the inductive step, assume that the result holds for some $n\geq 0$. Thus, $\stepXend{n+1}\neq\stepXend{n}$ by the induction hypothesis, so by \zcref{lem:chase is a chain}, there are terms $t_0, \dots, t_n$ such that $\adom(\stepXend{n+1})=\set{t_0, \dots, t_n, w}$ and the $\pSuccz$-atoms of $\stepXend{n+1}$ are exactly $\set{\pSucc(t_i,t_{i+1})}[0 \leq i < n+1]$. Thus, by Rule~\ref{rule:successor}, for all $k>n+1$, $\pSucc^k(t_0,w)\in\stepXend{n+1}$. Since $\encSet$ is unbounded, there is some $k$ such that $g^k(2)>n+1$. Hence, by \zcref{lemitem:semantics:G} of \zcref{lem:the predicates compute at least their semantics}, $\pG(t_0,w)\in\stepXend{n+1}$. Then, since $\pEnd(w)\in\stepXend{n+1}$, Rule~\ref{rule:existential} is applicable on $\stepXend{n+1}$ by mapping $y$ to $t_{g^k(2)}$.
	Since by construction, the $\pSuccz$-atoms of $\stepXend{n+1}$ are exactly $\set{\pSuccz(t_i,t_{i+1})}[0 \leq i < n]$, there is no way to homomorphically map the result of applying \ref{rule:existential} into $\stepXend{n+1}$, so the rule is indeed $\E$-applicable.

	Thus, this proves that at each step, Rule~\ref{rule:existential} is $\E$-applicable, so the $\E$-chase does not terminate on $\pair{\set{\pEnd(w)},\rsReduc}$. This entails that no $\X$-chase sequence from $\pair{\set{\pEnd(w)},\rsReduc}$ terminates, as the $\E$-chase terminates if and only if a universal model exists \cite{chase_revisited}, and any other variant terminating would yield such a model.
\end{proof}

\subsection{Termination of the Chase}
\label{sec:termination}

To prove that the $\X$-chase terminates if $\threecm$ halts (\zcref{thm:if G bdd then chase terminates}), it suffices to prove that in this case, the oblivious chase terminates on the so-called critical instance. The critical instance is the instance containing a single constant $w$, and in which all possible facts over the schema of the rule set are present. Indeed, if the oblivious chase terminates on the critical instance, it terminates on all instances \cite{generalized_schema_mappings}, and thus all chase variants terminate too via \zcref{prop:terminating dependence}. Let $\critstep{}$ be the critical instance for the schema of $\rsReduc$, and $\critstep{n}=\stepX{n}(\critstep{})$.

\begin{lemmarep}\label{lem:critical instance is like end instance}
	For all $n\in\nat$, $\critstep{n}\simeq\stepOend{n}\cup\set{\pSuccz(w,w)}$.
\end{lemmarep}%
\begin{proofsketch}
	By induction on $n$. The base case is obtained using Rules~\ref{rule:successor end}, \ref{rule:flood propagate} and \ref{rule:flood generate}. For the inductive step, the additional $\pSuccz$-atom of $\critstep{n}$ can only produce $\pSucc(w,w)$, which already belongs to $\critstep{}$. Thus, both instances feature the same applicable triggers.
\end{proofsketch}
\begin{proof}
	By induction on $n$. For the base case ($n=0$), note that $\stepOend{0}=\set{\pEnd(w),\pSucc(w,w),\pFlood(w),\pG(w,w)}\cup\set{\pR_i(w,w),\pT_i(w,w,w)}[i<p]$ by Rules~\ref{rule:successor end}, \ref{rule:flood propagate} and \ref{rule:flood generate}. Since the critical instance contains all possible facts over the schema, $\critstep{0}=\critstep{}=\stepOend{0}\cup\set{\pSuccz(w,w)}$, as required.

	For the inductive step, assume that the result holds for some $n\geq 0$. Thus, $\critstep{n}\simeq\stepOend{n}\cup\set{\pSuccz(w,w)}$. The only possible trigger that can use $\pSuccz(w,w)$ uses Rule~\ref{rule:successor} to produce $\pSucc(w,w)$, which is already present in $\critstep{}$. Thus, all other triggers are either applicable on both instances or on none of them, and produce the same atoms, so $\critstep{n+1}\simeq\stepOend{n+1}\cup\set{\pSuccz(w,w)}$, as required.
\end{proof}
\begin{theoremrep}\label{thm:if G bdd then chase terminates}
	If $\threecm$ halts, then the $\X$-chase terminates on $\pair{\inst,\rsReduc}$ for all instances $\inst$.
\end{theoremrep}%
\begin{proofsketch}
	If $\threecm$ halts, then $\encSet$ is bounded by $n$ and by \zcref{lemitem:semantics:G,lemitem:semantics:G only} of \zcref{lem:the predicates compute at least their semantics}, $\pG(t_0,w)\notin\critstep{n+1}$. Thus, Rule~\ref{rule:existential} is not applicable on $\critstep{n+1}$, which entails the result as discussed before \zcref{lem:critical instance is like end instance}.
\end{proofsketch}
\begin{proof}[Proof of \zcref{thm:if G bdd then chase terminates}]
	We use the notations of \zcref{lem:chase is a chain} to designate the terms of $\critstep{n}$, since according to \zcref{lem:critical instance is like end instance}, it is isomorphic to $\stepOend{n}\cup\set{\pSuccz(w,w)}$. Assume that $\encSet$ is bounded by $n\in\nat$. By \zcref{lemitem:semantics:G,lemitem:semantics:G only} of \zcref{lem:the predicates compute at least their semantics}, all $\pG$-atoms of $\critstep{n+1}$ with $t_0$ as first argument are of the form $\pG(t_0,t_{g^k(2)})$ for some $k$, such that $g^k(2)\leq n$. Thus, in particular, $\pG(t_0,w)\notin\critstep{n+1}$, so Rule~\ref{rule:existential} is not applicable on $\critstep{n+1}$. Since it is the only existential rule in $\rsReduc$, the $\Ob$-chase terminates on $\pair{\critstep{0},\rsReduc}$ by step $n+1$. Thus, by the initial remarks of this section, the $\X$-chase terminates on $\pair{\inst,\rsReduc}$ for all instances $\inst$.
\end{proof}
\subsection{Treewidth}
\label{sec:bounded treewidth}

We then prove that $\bts$-membership in $\Cmc$ is undecidable.

\begin{theorem}\label{thm:bts iff G bdd}
	$\rsReduc$ is $\bts$ if and only if $\threecm$ halts.
\end{theorem}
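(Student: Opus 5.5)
The plan is to treat the two directions separately, reusing the chase-termination results already established. For the ``if'' direction, assume $\threecm$ halts. By \zcref{thm:if G bdd then chase terminates}, the $\Ob$-chase terminates on every instance, so $\rsReduc \in \chaseterm{\Ob}{\forall}$. Since \bts contains every class $\chaseterm{\X}{Q}$, as recalled in the preliminaries, $\rsReduc$ is \bts. Concretely, each finite chase result has treewidth at most the number of terms created beyond the input; this number is uniformly bounded because the critical-instance chase is finite.

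For the ``only if'' direction, argue by contraposition. Assume $\threecm$ does not halt, and build an infinite clique in the $\pG$-graph of the chase from $\IEnd=\set{\pEnd(w)}$, then apply \zcref{prop:inf clique implies not bts}. By \zcref{thm:if G ubdd then chase does not terminate} and its proof, Rule~\ref{rule:existential} is applied at every step. By \zcref{lem:chase is a chain}, the result of a fair $\Ob$-derivation therefore contains an infinite chain $\dots, u_2, u_1, u_0 = w$ with $\pSuccz(u_{j+1},u_j)$ for all $j$, where $u_{j+1}$ is created by Rule~\ref{rule:existential} with $y \mapsto u_j$.

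The flooding rules then do the work. Every $u_j$ with $j \ge 0$ has an $\pSucc$-predecessor $u_{j+1}$ via Rule~\ref{rule:successor}, and $w$ also has the loop from Rule~\ref{rule:successor end}. Rule~\ref{rule:flood propagate} therefore gives $\pFlood(u_j)$ for all $j \ge 0$. Rule~\ref{rule:flood generate} fires on any triple of flooded terms, so it yields $\pG(u_j,u_k)$ for all $j,k$. Hence $\set{u_j}[j\in\nat]$ is an infinite clique, loops included, in the $\pG$-graph of the result, and $\rsReduc$ is not \bts.

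I expect the main obstacle to be a point of care rather than difficulty: \zcref{prop:inf clique implies not bts} must be applied to a derivation whose result really contains all these atoms. This holds because the infinite chain exists in any fair derivation, which the proof of \zcref{thm:if G ubdd then chase does not terminate} guarantees even for the $\E$ variant. Fairness then ensures the Datalog rules \ref{rule:flood propagate} and \ref{rule:flood generate} are eventually applied to every relevant trigger. Combining both directions with the undecidability of 3CM halting yields Point~2 of \zcref{theorem:ultimate-goal}.
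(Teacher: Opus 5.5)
Your proof is correct and is essentially the paper's. The `if' direction combines \zcref{thm:if G bdd then chase terminates} with the inclusion of the chase-terminating classes in \bts, and the `only if' direction floods the infinite $\pSuccz$-chain built from $\IEnd$ and applies \zcref{prop:inf clique implies not bts}; the only cosmetic difference is that you take the clique in the $\pG$-graph rather than the $\pR_0$-graph, and both come from Rule~\ref{rule:flood generate}.

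Drop the ``concretely'' aside, because it is false as stated. An input that is itself a large clique with no $\pEnd$-atom creates no new terms yet has large treewidth, and the number of created terms grows with the number of $\pEnd$-atoms in the input. It is also unnecessary: the \bts bound may depend on the instance, and any finite chase result trivially has finite treewidth.
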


If $\threecm$ halts, then by \zcref{thm:if G bdd then chase terminates}, the $\X$-chase terminates, so the rule set is clearly $\bts$. The converse direction is a direct consequence of the flooding mechanism.

\begin{lemma}\label{lem:G ubdd implies big clique}
	If $\threecm$ does not halt, then there is an infinite clique in the $\pR_0$-graph of $\chase^\X(\set{\pEnd(w)},\rsReduc)$.
\end{lemma}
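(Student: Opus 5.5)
The plan is to use the flooding rules directly. By \zcref{thm:if G ubdd then chase does not terminate} and its proof, if $\threecm$ does not halt then Rule~\ref{rule:existential} is applied at every stage. Moreover, every $\stepXend{n}$ differs from $\stepXend{n-1}$. So by \zcref{lem:chase is a chain}, the result $\chase^\X(\set{\pEnd(w)},\rsReduc)$ contains an infinite $\pSuccz$-chain of pairwise distinct terms, together with $w$. Relabelling by the order of creation, call these terms $u_1, u_2, \dots$, where $u_{j+1}$ is created by Rule~\ref{rule:existential} with $y \mapsto u_j$, so that $\pSuccz(u_{j+1},u_j)$ holds. The union of the chain stages is contained in the result of the fair derivation, and all relevant atoms persist (the derivation is monotonic). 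It therefore suffices to show that for any two distinct terms $a,b$ of this chain, $\pR_0(a,b)$ belongs to the result.

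The first step is to show that every chain term $u_j$ satisfies $\pFlood(u_j)$. Each $u_j$ (for $j \geq 1$) is the second argument of $\pSuccz(u_{j+1},u_j)$, because $u_{j+1}$ is eventually created. Rule~\ref{rule:successor} then yields $\pSucc(u_{j+1},u_j)$, and Rule~\ref{rule:flood propagate} yields $\pFlood(u_j)$. We also have $\pFlood(w)$ from the base stage.

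The second step is a direct application of Rule~\ref{rule:flood generate} with $x \mapsto a$, $y \mapsto b$, $z \mapsto b$ for any flooded $a,b$. Its body is satisfied, so the head atom $\pR_0(a,b)$ must hold in the result of any fair derivation. Rule~\ref{rule:flood generate} is Datalog, so all its triggers are eventually satisfied under any variant: for $\R$ and $\E$, a non-applicable Datalog trigger simply means its head atoms are already present. Hence $\pR_0(a,b)$ is present for all pairs, including $a=b$, and $\set{u_j}[j\geq 1]\cup\set{w}$ is an infinite clique in the $\pR_0$-graph.

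The only delicate point is the claim that the chain is genuinely infinite with distinct terms in an arbitrary fair $\X$-derivation, not just in the staged one $\stepX{n}$. In particular, for the restricted chase the order of rule applications might matter. I would handle this by arguing, as in the proof of \zcref{thm:if G ubdd then chase does not terminate}, that the fair derivation cannot be finite. Any finite prefix has the chain shape of \zcref{lem:chase is a chain}. The newest term $t_0$ eventually gets $\pG(t_0,w)$, because $\encSet$ is unbounded, and Rule~\ref{rule:existential} is then $\E$-applicable because there is no $\pSuccz$-atom into $t_0$. Fairness therefore forces the creation of a new predecessor, and this yields infinitely many distinct flooded terms.
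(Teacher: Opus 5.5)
Your proof is correct and follows the paper's argument: non-termination plus the chain lemma give infinitely many chain terms, each with an incoming $\pSucc$-edge. Rule~\ref{rule:flood propagate} then floods them all, and Rule~\ref{rule:flood generate} puts $\pR_0$ between every pair. Your extra care about arbitrary (non-staged) fair derivations goes beyond what the paper spells out.

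One side remark is inaccurate as stated. A Datalog trigger that is not $\E$-applicable need not already have its head atoms present in general, since a non-identity endomorphism can absorb them. The claim does hold here: every instance reachable from $\set{\pEnd(w)}$ is a single $\pSuccz$-chain ending at the unique $\pEnd$-term $w$ with no $\pSuccz$-self-loop, so its only endomorphism is the identity.
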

\begin{proof}
	Assume $\threecm$ does not halt.
	By \zcref{thm:if G ubdd then chase does not terminate}, for all $n$, $\stepXend{n}\neq\stepXend{n-1}$, so by \zcref{lem:chase is a chain}, there is an infinite sequence of terms $u_0, u_1, \dots$ such that $u_0=w$, and for all $i \geq 0$, $\pSucc(u_{i+1},u_i)\in\chase^\X(\set{\pEnd(w)},\rsReduc)$.
	By Rule~\ref{rule:flood propagate}, for all $i,j \geq 0$, $\pFlood(u_i)\in\chase^\X(\set{\pEnd(w)},\rsReduc)$, so by Rule~\ref{rule:flood generate}, $\pR_0(u_i,u_j)$ is also in the chase result, so the $\pR_0$-graph contains an infinite clique.
\end{proof}

\zcref{lem:G ubdd implies big clique,prop:inf clique implies not bts} conclude the proof.
\section{Decidability of BCQ Entailment}
\label{sec:bcq-entailment}

We now turn to decidability of BCQ entailment, that is Point~3 of Theorem~\ref{theorem:ultimate-goal}.
Consider an instance $\inst$, a set of rules $\rsReduc$ from our class $\Cmc$ constructed in Section~\ref{sec:ruleset}, and a BCQ $\query$.
To decide whether $\pair{\inst, \rsReduc} \models \query$, our algorithm computes the result of the semi-oblivious chase up to depth $\stopthecount \coloneqq 2\size{\query} + 1$, that is $\stepSO{\stopthecount}(\inst)$, and then checks whether $\stepSO{\stopthecount}(\inst) \models \query$.
This clearly terminates as we only require to apply finitely many rules in the first step, and then evaluate a usual BCQ on a finite instance in the second step.
It remains to prove the correctness:
\begin{lemma}\label{lem:no need to go far}
	$\pair{\inst, \rsReduc} \models \query$ iff $\stepSO{\stopthecount}(\inst) \models \query$.
\end{lemma}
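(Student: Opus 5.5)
The ``if'' direction is immediate: $\stepSO{\stopthecount}(\inst)$ is a prefix of a fair $\SO$-derivation, so it embeds into $\res(\der)$ for some fair $\SO$-derivation $\der$, and we conclude by \zcref{proposition:chase-correctness}. For the ``only if'' direction, fix the full result $\chase^\SO(\inst,\rsReduc)$ obtained by continuing the derivation through the stages $\stepSO{n}$, together with a homomorphism $h$ from $\query$ into it. We aim to build a homomorphism from $\query$ into $\stepSO{\stopthecount}(\inst)$. The plan is to retract the ``deep'' terms, meaning those created at a stage beyond $\stopthecount$, onto shallower ones, and to argue that $\query$ cannot tell the difference.

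First, I would describe the chase structurally, generalising \zcref{lem:chase is a chain} to an arbitrary instance. Each null $x$ is created by Rule~\ref{rule:existential} from a frontier term $y$ and carries a single outgoing $\pSuccz$-edge to $y$. The nulls therefore form a forest hanging off $\adom(\inst)$ along $\pSuccz$, and the depth of a null is its $\pSuccz$-distance to $\adom(\inst)$.

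The crucial structural fact comes from the flooding mechanism. Once a null $x$ has a predecessor $x'$ (so $\pSuccz(x',x)$ holds), we get $\pFlood(x)$ by Rule~\ref{rule:flood propagate}. Every flooded element is then linked by all of $\pG,\pR_j,\pT_j$ to every flooded element, through Rule~\ref{rule:flood generate}. Consequently, all flooded nulls are interchangeable for the atoms over $\pG,\pR_j,\pT_j,\pFlood$. Only the unflooded leaves of the forest (the newest nulls) have a genuinely restricted set of simulation atoms, and those atoms are determined by the $\pSucc$-path below them.

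A subtlety is that $\pSucc$-paths from a leaf lead down to flooded terms, so the simulation atoms of a leaf refer to terms along its chain. I would show that the set of atoms involving a null at depth $d$, viewed up to the terms of its chain, stabilises in a periodic way. More precisely, for $d>\stopthecount$ there is a shallower null whose neighbourhood is isomorphic via the shift map along chains.

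Next, consider the image $h(\query)$. Its terms are connected through $\pSuccz$/$\pSucc$-edges in at most $\size{\query}$ ``segments''. Each maximal $\pSucc$-path used by $h$ inside a chain has length at most $\size{\query}$, while any other connection passes through flooded terms or through $\adom(\inst)$. Along each chain we then shift the image segment down towards the root. The target is the position at depth at most $2\size{\query}+1$, which leaves one spare level so that a segment's bottom term is flooded exactly when the original was. Flooded terms elsewhere go to any flooded term of $\stepSO{\stopthecount}(\inst)$. To make this work, I would prove the following for depth $\stopthecount$. If a chain reaches depth $\stopthecount$ then, since every null of depth at most $\stopthecount-1$ on it is flooded, the part of $\stepSO{\stopthecount}(\inst)$ below it contains enough flooded nulls and enough $\pSucc$-structure to host each query segment. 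Composing the shift with $h$ then yields the desired homomorphism.

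The main obstacle is the local isomorphism claim for unflooded leaves and for the terms just below them. The $\pR_i$, $\pT_i$ and $\pG$ atoms of the last unflooded null depend on the exact lengths of $\pSucc$-paths to the end, and the end may be the well $w$ with its self-loop or a term of $\inst$. I would handle this by noting that such atoms are only ever needed for at most $\size{\query}$ query atoms. It then suffices to copy the bottom of a long chain onto the bottom of a chain of length $\stopthecount$: the $\pSucc$-self-loops at $\pEnd$ terms, together with the flooding of all other chain nodes, absorb the difference. If this exact matching fails, the fallback is to map the whole segment into the flooded part, keeping only the query's $\pSuccz$-edges rigid.
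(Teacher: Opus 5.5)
Your overall strategy is the paper's: flooding makes all non-leaf nulls interchangeable, so the $\pSucc$-connected pieces of $\ahom(\query)$ can be slid toward the roots of their chains, and the bound $2\size{\query}+1$ leaves room. However, the claim you put at its centre is false, and the idea that actually works appears only as an unjustified fallback.

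\textbf{The isomorphism claim fails.} Take an unflooded leaf $a_m$ with $m>\stopthecount$. In general no shallower null has an isomorphic neighbourhood, and a moved segment cannot keep its end flooded ``exactly when the original was''. Every shallower null on the same chain is flooded, so it carries $\pFlood$ and every $\pG,\pR_j,\pT_j$-atom towards flooded terms. By contrast, the leaf of another chain, or the top $a_{\stopthecount}$ of the same chain in $\inst_{\stopthecount}$, sees different $\pSucc$-distances and hence different simulation atoms. For example, if the root $a$ has no other incoming $\pSucc$-path, then $\pR_j(a_m,a)$ holds iff $j\equiv m \pmod p$. The $\pSucc$-loops on $\pEnd$-terms sit at depth $0$ and cannot absorb this difference.

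\textbf{The missing idea.} You only need a homomorphism, not an isomorphism. Queries are positive, and by $R_{\mathsf{flood}}^{\mathsf{gen}}$ every $\pG,\pR_j,\pT_j$-atom whose arguments are all flooded is present. So shift a deep component toward the root until even its topmost term lands strictly below the leaf. Then all its images are flooded, and only the $\pSucc/\pSuccz$-edges need to be preserved, which the shift does. This is the paper's argument, and it makes your isomorphism and periodicity analysis unnecessary.

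\textbf{Facts you take for granted.} Turning this into a proof needs three facts, which the paper proves as separate lemmas.
(1) In the semi-oblivious chase the nulls form a single chain above each $a\in\adom(\inst)$, and the null at depth $i$ is created exactly at stage $i$. Without this you cannot conclude that the shifted images, at depth at most $2\size{\query}$, lie in $\inst_{\stopthecount}$ and are flooded there.
(2) Locality: an atom over $\pG,\pR_j,\pT_j$ created after stage $0$ whose arguments are not all flooded has, as first argument, the null created at that stage. All its other arguments lie on $\pSucc$-paths from that null, so the only argument that can be unflooded is a leaf.
(3) Stabilisation: every atom of the full chase among terms of depth at most $\stopthecount-1$ already belongs to $\inst_{\stopthecount}$. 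This handles the components you leave in place.

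\textbf{Independent moves can break atoms.} Sending ``flooded terms elsewhere'' to arbitrary flooded terms, or deciding independently which components to move, is unsound. Suppose $\pR_1(x,y)\in\query$, $\ahom(x)$ is an unflooded leaf that you keep, and $\ahom(y)$ is its $\pSucc$-successor, but $y$ lies in another component that you move. Then the shifted atom generally fails. The fix is to move a component exactly when its image goes beyond depth $\stopthecount-1$. Then, by (2), every atom joining a moved and a kept component has only flooded arguments after the shift, and $R_{\mathsf{flood}}^{\mathsf{gen}}$ applies in $\inst_{\stopthecount}$. A criterion based only on a component's minimum depth, as in the paper's write-up, does not by itself exclude this configuration when the kept leaf has depth at most $\stopthecount-1$.
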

The rest of this section is devoted to proving the above.
We first define $\inst_n \coloneqq \stepSO{n}(\inst)$ for all $n$, and $\inst_\infty \coloneqq \bigcup_{n \geq 0} \inst_n$.
The successive $\inst_n$ again induce a fair derivation, therefore $\inst_\infty$ is a universal model of $\pair{\inst, \rsReduc}$. Thus, \zcref{lem:no need to go far} can be rephrased as $\inst_\infty\models \query$ iff $\inst_\stopthecount\models\query$ using \zcref{proposition:chase-correctness}. With this formulation, the `if' direction is the easiest one: indeed, if $\inst_M \models \query$, we get $\inst_\infty\models\query$ from the fact that $\inst_\infty$ contains $\inst_M$.

We now turn to proving the `only-if' direction of Lemma~\ref{lem:no need to go far}.
Assume $\inst_\infty \models \query$. Compactness guarantees that there exists $N$ such that $\inst_N \models \query$.
If $N \leq \stopthecount$, we are done since $\inst_N \subseteq \inst_M$.
Otherwise, if $N > \stopthecount$, consider a homomorphism $\ahom$ from $\query$ to $\inst_N$.
We want to construct a homomorphism $\ahom' : \query \rightarrow \inst_\stopthecount$ based on $\ahom$.
Our main issue is atoms that $\ahom$ maps to $\inst_N\setminus\inst_\stopthecount$, for which we need to find replacements within $\inst_M$.

Intuitively, we distinguish between the $\pEnd$-atoms (which cannot be introduced during the chase), the $\pSuccz$-atoms, the $\pSucc$-atoms, and the others.
Indeed, apart from within the original instance $\inst$, the $\pSuccz$- and $\pSucc$-atoms form chains ending on some element from $\adom(\inst)$.
Therefore, if $h$ involves such $\pSuccz$- and $\pSucc$-atoms from $\inst_N \setminus \inst_M$, we know they lie along a chain, say at distance $\ell$ to $\inst$.
We can then prove that a sufficient part of this chain already belongs to $\inst_M$, due to $M \geq 2\size{q}$.
To build $h'$, we can thus `shift' $h$ by $\ell- M$ atoms to use instead this part of the chain.
Furthermore, we can establish that \emph{every} term from $\inst_N$ lies on such a chain, therefore, when we shift our homomorphism along the chain, due to the combination of Rules~\ref{rule:flood propagate} and \ref{rule:flood generate}, all the $\apred$-atoms are present for $\apred \notin \{ \pEnd, \pSuccz, \pSucc\}$.
The only exception is when such a $\apred$-atom features a term $t$ that is the starting point of a chain (thus Rule~\ref{rule:flood propagate} may not trigger on $t$) and that $\ell \leq M$ (so that we don't shift).
But $\ell \leq M$ guarantees the atom already belongs to $\inst_M$.

We summarize considerations on $\pSuccz$- and $\pSucc$-atoms in the next lemma, whose proof is a simple induction:
\begin{lemmarep}\label{lem:chase is a bunch of chains}
	For all $n \geq 0$ and $a\in\adom(\inst)$, there exist integers $0 \leq m_a \leq n$ s.t.:
	(i) the domain of $\inst_n$ consists of distinct terms $a_i$ with $a \in \adom(\inst)$ and $0 \leq i \leq m_a$, and where $a_0 = a$;
	and (ii)
	$\pSuccz$ atoms of $\inst_n$ are exactly:
	\begin{align*}
		&
		\{ \pSuccz(a, b) \mid \pSuccz(a, b) \in \inst \}
		\\
		&
		\cup \{ \pSuccz(a_{i+1}, a_{i}) \mid
		0 \leq i < m_a, 
		a \in \adom(\inst) \},
	\end{align*}
	and, as a corollary, the $\pSucc$ atoms of $\inst_n$ are exactly: 
	\begin{align*}
		&
		\{ \pSucc(a, b) \mid \pSucc(a, b) \in \inst \text{ or } \pSuccz(a, b) \in \inst \}
		\\
		&
		\cup \{ \pSucc(a,a) \mid \pEnd(a)\in\inst \}
		\\
		&
		\cup \{ \pSucc(a_{i+1}, a_{i}) \mid 0 \leq i < m_a, a \in \adom(\inst) \}.
	\end{align*}
\end{lemmarep}

\begin{proof}
	We proceed by induction on $n \geq 0$.
	
	For $n = 0$, it suffices to set $m_a = 0$ for every $a \in \adom(\inst)$.
	Since we only applied our Datalog rules on $\inst$ (recall that $\inst_0 = \stepX{0}(\inst) = \chase^\SO(\inst,\DatS{\rsReduc})$ by definition), it is clear that Point~($i$) is satisfied: $\adom(\inst_0) = \adom(\inst)$.
	For Point~($ii$), notice that no rule from $\DatS{\rsReduc}$ allows introducing an $\pSuccz$ atom, hence the $\pSuccz$-atoms of $\inst_0$ are exactly $\{ \pSuccz(a, b) \mid \pSuccz(a, b) \in \inst  \}$, as claimed.
	Regarding the $\pSucc$ atoms, notice that the claimed expression exactly accounts for applications of Rules~\ref{rule:successor} and \ref{rule:successor end}.
	
	Let us now assume that the claim holds at some step $n \geq 0$ and let us consider $\inst_{n+1}$.
	We apply the induction hypothesis on $\inst_n$ and denote $m_{n, a}$ the obtained integers, for $a \in \adom(\inst)$.
	As there is a single existential rule in $\rsReduc$, every term $a_i$ of $\adom(\inst_n)$, for $a \in \adom(\inst)$ and $0 < i \leq m_{n, a}$ has been generated by rule \ref{rule:existential} with $y$ necessarily mapped to $a_{i-1}$ due to the description of $\pSuccz$ we have in $\inst_n$.
	Therefore, a trigger for \ref{rule:existential} mapping $y$ on such a term $a_i$ is no longer $\SO$-applicable unless $i = m_{n, a}$.
	In that case, each such application of \ref{rule:existential} produces a new term $a_{m_{n, a}+1}$, along with the atom $\pSuccz(a_{m_{n, a}+1}, a_{m_{n, a}})$.
	Note that there cannot be any further trigger for \ref{rule:existential} on the freshly introduced term $a_{m_{n, a}+1}$ as \ref{rule:existential} does not introduce the necessary $\pG$ atom.
	Therefore, we can set $m_a := m_{n, a} + 1$ whenever such a fresh element is introduced.
	Notice that the induction hypothesis guarantees $m_{n, a} \leq n$, so that $m_a \leq n + 1$ as desired for Point~($i$).
	For Point~($ii$), note that such applications of \ref{rule:existential} are the only way to produce more $\pSuccz$ atoms, so that the description of $\pSuccz$ atoms (and subsequently of $\pSucc$ atoms), is correct.
\end{proof}

\begin{toappendix}
	We now state a direct corollary of Lemma~\ref{lem:chase is a bunch of chains} which provides a similar full-characterization for the predicate $\pFlood$.
	This lemma allows to simplify the presentation of the proof of Lemma~\ref{lem:magic} further in this appendix.
	\begin{corollary}\label{coro:flood}
		Let $n \geq 0$ and consider the terms of $\adom(\inst_n)$ as described in Lemma~\ref{lem:chase is a bunch of chains}.
		The $\pFlood$ atoms of $\inst_n$ are exactly: 
		\begin{align*}
			&
			\{ \pFlood(a) \mid \pFlood(a) \in \inst \text{ or } \pEnd(a)\in\inst \}
			\\
			&
			\cup \{ \pFlood(a) \mid \pSucc(b, a) \in \inst \text{ or } \pSuccz(b, a) \in \inst \}
			\\
			&
			\cup \{ \pFlood(a_{i}) \mid 0 \leq i < m_a, a \in \adom(\inst) \}.
		\end{align*}
	\end{corollary}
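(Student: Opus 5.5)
The plan is to read off the $\pFlood$-atoms of $\inst_n$ directly from the characterisation of $\pSucc$-atoms given by Lemma~\ref{lem:chase is a bunch of chains}. The key observation is that Rule~\ref{rule:flood propagate} is the only rule of $\rsReduc$ whose head contains a $\pFlood$-atom. Rule~\ref{rule:flood generate} only uses $\pFlood$ in its body, and Rule~\ref{rule:existential} does not mention it at all. Hence I will argue the equality
\[
\set{\pFlood(t)}[\pFlood(t)\in\inst_n] \;=\; \set{\pFlood(t)}[\pFlood(t)\in\inst] \cup \set{\pFlood(t)}[\exists s,\ \pSucc(s,t)\in\inst_n],
\]
by proving the two inclusions separately.

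For the inclusion ``$\subseteq$'', every $\pFlood$-atom of $\inst_n$ is either in $\inst$ or was produced by a trigger of Rule~\ref{rule:flood propagate}. The support of such a trigger is an atom $\pSucc(s,t)$ belonging to some earlier instance of the derivation, hence to $\inst_n$ by monotonicity. For the inclusion ``$\supseteq$'', I use that $\inst_n = \stepSO{n}(\inst)$ is by definition the result of a fair chase under $\DatS{\rsReduc}$, so it is closed under Rule~\ref{rule:flood propagate}. Moreover, since that rule has no existential variable, a trigger for it is $\SO$-applicable exactly when its output is missing. Therefore every $\pSucc(s,t)\in\inst_n$ yields $\pFlood(t)\in\inst_n$.

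It then remains to substitute the exact list of $\pSucc$-atoms from Lemma~\ref{lem:chase is a bunch of chains} and collect the second arguments:
\begin{itemize}
\item atoms $\pSucc(b,a)$ or $\pSuccz(b,a)$ in $\inst$ give the second line of the statement;
\item atoms $\pSucc(a,a)$ with $\pEnd(a)\in\inst$ give the $\pEnd$ case of the first line;
\item atoms $\pSucc(a_{i+1},a_i)$ for $0\le i<m_a$ give exactly $\pFlood(a_i)$ for $0\le i<m_a$, the third line.
\end{itemize}
Combined with the $\pFlood$-atoms already present in $\inst$, this is precisely the claimed set.

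No step is a real obstacle; the corollary is bookkeeping. The only point needing care is the closure argument: $\inst_n$ must really be saturated under the Datalog rules, which holds because each step of the definition of $\stepSO{\cdot}$ ends with a Datalog closure. One must also note that the index range $i<m_a$ (rather than $i\le m_a$) is correct. The newest term $a_{m_a}$ has no incoming $\pSucc$-edge beyond those possibly in $\inst$, and those cases are already covered by the first two lines.
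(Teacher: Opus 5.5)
Your proof is correct and follows the paper's own argument. The paper also derives the corollary directly from the exact description of $\pSucc$-atoms in Lemma~\ref{lem:chase is a bunch of chains} together with the observation that Rule~\ref{rule:flood propagate} is the only rule producing $\pFlood$-atoms; you just spell out the two inclusions and the Datalog-closure of $\inst_n$ that the paper leaves implicit.
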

	\begin{proof}
		This is a direct consequence of the description of the $\pSucc$-atoms given by Lemma~\ref{lem:chase is a bunch of chains} joint with the observation that a $\pFlood$-atom can only be produced by an application of Rule~\ref{rule:flood propagate}.
	\end{proof}
\end{toappendix}

{
\pgfmathsetmacro{\chainSep}{1.5} 
\pgfmathsetmacro{\betChains}{-1.2} 
\pgfmathsetmacro{\dX}{-0.8*\chainSep}
\pgfmathsetmacro{\dY}{1.25*\betChains}
\pgfmathsetmacro{\eX}{\dX-0.8*\chainSep}
\pgfmathsetmacro{\eY}{.5*\betChains}
\pgfmathsetmacro{\fX}{\eX}
\pgfmathsetmacro{\fY}{1.75*\betChains}

\begin{figure*}
\begin{center}
\begin{tikzpicture}[
	boxedNode/.style={draw, rectangle, minimum size=0.6cm, fill=white, rounded corners},
   	Spred/.style={-stealth, thick},
   	partTpred/.style={thick, purple},
   	Tpred/.style={partTpred, -Stealth},
   	predLabel/.style={fill=white, rectangle, inner sep=1pt, semithick},
   	predLabelColored/.style={fill=magenta!10, rectangle, inner sep=1pt, semithick},
   	xscale=1.5
]

\draw [dashed, thick, magenta, rounded corners, fill=magenta!10] (\eX-0.35*\chainSep, -0.5*\betChains) rectangle (0.35*\chainSep,2.5*\betChains);

\node [anchor=north west] at (\eX-0.25*\chainSep, -0.35*\betChains) {\LARGE $\inst$};

\node [boxedNode] (a) at (0,0) {$a$};
\node [boxedNode] (a1) at (\chainSep, 0) {$a_1$};
\node [boxedNode] (b) at (0,\betChains) {$b$};
\node [boxedNode] (b1) at (\chainSep,\betChains) {$b_1$};
\node [boxedNode] (b2) at (2*\chainSep,\betChains) {$b_2$};
\node [boxedNode] (c) at (0,2*\betChains) {$c$};
\node [boxedNode] (c1) at (\chainSep,2*\betChains) {$c_1$};
\node [boxedNode] (c2) at (2*\chainSep,2*\betChains) {$c_2$};
\node [boxedNode] (c3) at (3*\chainSep,2*\betChains) {$c_3$};

\node (cEnd) at (-0.3*\chainSep,2.25*\betChains) {$\pEnd$};

\node [boxedNode] (d) at (\dX,\dY) {$d$};
\node [boxedNode] (e) at (\eX,\eY) {$e$};
\node [boxedNode] (f) at (\fX,\fY) {$f$};

\path [Spred] (a1) edge node[predLabel, pos=0.4] {\small$\pSuccz$} (a);
\path [Spred] (b1) edge node[predLabel, pos=0.4] {\small$\pSuccz$} (b);
\path [Spred] (b2) edge node[predLabel, pos=0.4] {\small$\pSuccz$} (b1);
\path [Spred] (c1) edge node[predLabel, pos=0.4] {\small$\pSuccz$} (c);
\path [Spred] (c2) edge node[predLabel, pos=0.4] {\small$\pSuccz$} (c1);
\path [Spred] (c3) edge node[predLabel, pos=0.4] {\small$\pSuccz$} (c2);

\path [Spred] (a) edge node[predLabelColored, pos=0.4] {\small$\pSuccz$} (b);
\path [Spred] (b) edge node[predLabelColored, pos=0.4] {\small$\pSuccz$} (c);
\path [Spred, out=250, in=110] (e) edge node[predLabelColored, pos=0.4] {\small$\pSuccz$} (f);
\path [Spred, out=70, in=290] (f) edge node[predLabelColored, pos=0.4] {\small$\pSuccz$} (e);
\path [Spred] (f) edge node[predLabelColored, pos=0.4] {\small$\pSuccz$} (d);
\path [Spred, out=225, in=90] (a) edge node[predLabelColored, pos=0.4] {\small$\pSuccz$} (d);
\path [Spred] (a) edge node[predLabelColored, pos=0.4] {\small$\pSuccz$} (e);
\path [Spred] (c) edge node[predLabelColored, pos=0.4] {\small$\pSuccz$} (f);

\end{tikzpicture}
\end{center}

 \caption{Representation of predicates $\pSuccz$ and $\pEnd$ in $\inst_\infty$ under the hypothesis that $\encSet$ is bounded by $2$. The magenta boxed part of the figure is the instance $\inst$. The $\pSuccz$-atoms, as described in \zcref{lem:chase is a bunch of chains}, form several chains leading to elements of $\inst$. Each of the $\pSuccz$-chains only grows until the distance to the $\pEnd$-predicate is bigger than the bound on $\encSet$ (here 2).}
 \label{fig:fig6-structure-chase}
\end{figure*}
}

Note that the above echoes Lemma~\ref{lem:chase is a chain}, except that the end point can now be any element from the original instance $\inst$.
Notice that using the semi-oblivious chase also prevents branching: each element from $\adom(\inst)$ can only be the end point of one such chain of fresh elements.
The result of the chase is depicted in \zcref{fig:fig6-structure-chase}. 

We now clarify that, as the chase progresses, new facts either stem from the flooding mechanism or involve one of the freshly introduced starting points of a chain. 

\begin{lemmarep}\label{lem:magic}
	Let $n \geq 0$ and $\apred(t_1, \dots, t_m)$ an atom.
	If $\apred(t_1, \dots, t_m) \in \inst_{n+1} \setminus \inst_n$ then one of the following holds:
	\begin{enumerate}[leftmargin=1.85em]
		\item $t_1 \in \adom(\inst_{n+1}) \setminus \adom(\inst_n)$ and there are $k_2$,\dots, $k_m \geq 0$ s.t.\ $\pSucc^{k_2}(t_1, t_2), \dots, \pSucc^{k_m}(t_1, t_m) \in \inst_{n+1}$, where  $\pSucc^k(x,y)$ denotes an $\pSucc$-path of size $k$ from $x$ to $y$.
		\item for every $1 \leq i \leq m$, we have $\pFlood(t_i) \in \inst_{n+1}$.
	\end{enumerate}
\end{lemmarep}

\begin{proofsketch}
	Apart from Rules~\ref{rule:flood propagate} and \ref{rule:flood generate} that relate to the flooding, notice that the first term occurring in an atom of some head also appears as the first term in an atom from the body.
	The rest follows by induction.
\end{proofsketch}

\begin{proof}
	Let $n \geq 0$.
	We prove that the derivation leading from $\inst_n$ to $\inst_{n+1}$ only produces atoms that respect the claimed condition. 
	We proceed by induction on this derivation.
	If the derivation is empty, that is $\inst_{n+1} = \inst_n$, then we are done as $\inst_{n+1} \setminus \inst_n = \emptyset$.
	Otherwise, let us assume that the claimed condition holds for all atoms introduced so far, and let $R$ be the next rule to be applied.
	Assume $R$ introduces an atom $\apred(t_1, \dots, t_m) \in \inst_{n+1} \setminus \inst_n$.
	We treat all nine possible cases for $R$.
	\begin{itemize}[leftmargin=1cm]
		\item[\ref{rule:successor}.]
		We have $\apred(t_1, \dots, t_m) = \pSucc(t_1, t_2)$ and $\pSuccz(t_1, t_2)$ has already been derived.
		If $\pSucc(t_1, t_2) \in \inst_{n+1} \setminus \inst_n$, we can apply the induction hypothesis which immediately concludes.
		Otherwise $\pSucc(t_1, t_2) \in \inst_{n}$, but $\inst_n$ is closed under Datalog rules, thus under Rule~\ref{rule:successor} and thus $\pSuccz(t_1, t_2) \in \inst_n$, contradiction.
		
		\item[\ref{rule:count}.]
		We have $\apred(t_1, \dots, t_m) = \pR_{i+1}(t_1, t_2)$ and there exists some $t$ such that $\pR_i(t_1, t)$ and $\pSucc(t, t_2)$ have already been derived.
		Notice that, due to $\pSucc(t, t_2)$ and $\inst_{n+1}$ being close under Datalog rules, we have $\pFlood(t_2) \in \inst_{n+1}$ due to Rule~\ref{rule:flood propagate}.
		If $\pR_i(t_1, t) \notin \inst_n$, then we can apply the induction hypothesis.
		In case $t_1 \notin \adom(\inst_{n})$, we have $\pSucc^k(t_1, t) \in \inst_{n+1}$, thus using $\pSucc(t, t_2)$ we obtain $\pSucc^{k+1}(t_1, t_2) \in \inst_{n+1}$ as desired.
		In case $\pFlood(t_1) \in \pFlood(\inst_{n+1})$, recall that also $\pFlood(t_2) \in \inst_{n+1}$ thus we are done.
		If $\pR_i(t_1, t) \in \inst_n$, note that $\pSucc(t, t_2) \in \inst_n$ as well using Lemma~\ref{lem:chase is a bunch of chains}.
		Therefore Rule~\ref{rule:count} was already applicable on $\inst_n$ which is closed under Datalog rules, so $\pR_{i+1}(t_1, t_2) \in \inst_n$, contradiction.
		
		\item[\ref{rule:mult}.]
		The argument is essentially the same as for $R = \ref{rule:count}$.
		We have $\apred(t_1, \dots, t_m) = \pT_{i}(t_1, t_2, t_3)$ and there exists some terms $s, s'$ such that $\pT_{i}(t_1, s, s')$, $\pSucc^{q_i}(s, t_2)$ and $\pSucc^{r_i}(s', t_3)$ have already been derived.
		Notice again that $\pFlood(t_2), \pFlood(t_3) \in \inst_{n+1}$.
		If we can apply the induction hypothesis on atom $\pT_{i}(t_1, s, s')$, we can thus conclude again by composing the $\pSucc$-paths or via $\pFlood(t_1) \in \inst_{n+1}$.
		Otherwise we have $\pT_{i}(t_1, s, s') \in \inst_{n}$ and thus, using Lemma~\ref{lem:chase is a bunch of chains} twice we get $\pSucc^{q_i}(s, t_2), \pSucc^{r_i}(s', t_3) \in \inst_n$.
		We could thus derive $ \pT_{i}(t_1, t_2, t_3)$ in $\inst_n$, contradiction.
		
		\item[\ref{rule:flood propagate}.]
		We have $\apred(t_1, \dots, t_m) = \pFlood(t_1)$ thus $\pFlood(t_1) \in \inst_{n+1}$ and we are done.
		
		\item[\ref{rule:successor end}.]
		We have $\apred(t_1, \dots, t_m) = \pSucc(t_1, t_1)$ and $\pEnd(t_1)$ has already been derived.
		Since we cannot produce $\pEnd$ using rules, we have $\pEnd(t_1) \in \inst$, thus $\pSucc(t_1, t_1) \in \inst_0$ contradicting $\pSucc(t_1, t_1) \notin \inst_n$.
		
		\item[\ref{rule:G init}.]
		We have $\apred(t_1, \dots, t_m) = \pG(t_1, t_2)$ and there exists some $t$ such that $\pSucc(t_1, t)$ and $\pSucc(t, t_2)$ have already been derived.
		If $\pSucc(t_1, t) \notin \inst_n$ then $t_1 \notin \adom(\inst_{n})$ using Lemma~\ref{lem:chase is a bunch of chains} and clearly $\pSucc^2(t_1, t_2) \in \inst_{n+1}$ as desired.
		Otherwise $\pSucc(t_1, t) \in \inst_n$ and therefore, using Lemma~\ref{lem:chase is a bunch of chains} we have $\pSucc(t, t_2) \in \inst_n$ as well.
		Thus Rule~\ref{rule:G init} is already applicable in $\inst_n$, which is closed under Datalog rules thus $\pG(t_1, t_2) \in \inst_n$, contradiction.
		
		\item[\ref{rule:G step}.]
		We have $\apred(t_1, \dots, t_m) = \pG(t_1, t_2)$ and there exists some term $t$ such that $\pG(t_1, t)$, $\pR_i(t_1, t)$ and $\pT_i(t_1, t, t_2)$ have already been derived.
		If $\pT_i(t_1, t, t_2) \notin \inst_n$, then we can apply the induction hypothesis to it which immediately yields the desired conclusion.
		Otherwise we have $\pT_i(t_1, t, t_2) \in \inst_n$.
		If both $\pG(t_1, t), \pR_i(t_1, t) \in \inst_n$ then Rule~\ref{rule:G step} was already applicable in $\inst_n$, contradiction.
		Otherwise we can apply the induction hypothesis to one of these, which gives, in both cases $\pFlood(t_1), \pFlood(t) \in \inst_{n+1}$ since the case of $t_1 \notin \adom(\inst_{n+1})$ would be contradicting $\pT_i(t_1, t, t_2) \in \inst_n$.
		We now consider the introduction of $\pT_i(t_1, t, t_2) \in \inst_{n}$.
		\begin{itemize}
			\item 
			If it has been introduced by the application of Rule~\ref{rule:mult}, then notice that $t_3$ has a necessary $\pSucc$-predecessor thus $\pFlood(t_3) \in \inst_{n+1}$ and we are done.
			\item
			If it has been introduced by the application of Rule~\ref{rule:flood generate}, then notice that the body guarantees $\pFlood(t_3) \in \inst_{n+1}$ and we are done.
			\item
			If it has been introduced by the application of Rule~\ref{rule:existential}, then notice that $t_1 = t = t_3$ and since $\pFlood(t_1) \in \inst_{n+1}$ we are done.
			\item
			Otherwise $\pT_i(t_1, t, t_2) \in \inst$, in which case Corollary~\ref{coro:flood} yields $\pFlood(t_1), \pFlood(t) \in \inst_{0}$ since $\pFlood(t_1), \pFlood(t) \in \inst_{n+1}$ Therefore, via Rule~\ref{rule:flood generate} and recalling that $\inst_0$ is closed under Datalog rules, we have $\pG(t_1, t), \pR_i(t_1, t) \in \inst_0 \subseteq \inst_n$.
			We reached a contradiction.
		\end{itemize}

		\item[\ref{rule:flood generate}.] 
		The body of \ref{rule:flood generate} guarantees that $\pFlood(t_i)$ has already been derived for all $t_i$'s and thus that $\pFlood(t_i) \in \inst_{n+1}$ for all $t_i$'s as desired.
		
		\item[\ref{rule:existential}.]
		We have $\apred(t_1, \dots, t_m) \in \{ \pSuccz(t_1 , t_2) ,\pR_0(t_1,t_1) \} \cup \{ \pT_j(t_1,t_1,t_1) \mid 1 \leq j \leq p-1 \}$ and $t_1$ is freshly introduced, thus $t_1 \notin \adom(\inst_{n})$.
		In cases $\pR_0(t_1,t_1)$ and $\pT_j(t_1,t_1,t_1)$, the claim is trivially true as $\pSucc^0(t_1, t_2) \in \inst_{n+1}$ (that is $t_1 = t_2$).
		For the case $\apred(t_1, \dots, t_m) = \pSuccz(t_1 , t_2)$, note that it makes the rule \ref{rule:successor} applicable.
		Since we later saturate by Datalog rules in the derivation from $\inst_n$ to $\inst_{n+1}$, this guarantees that $\pSucc(t_1, t_2)$ will further be derived due to Rule~\ref{rule:successor}.
		Hence $\pSucc(t_1, t_2) \in \inst_{n+1}$ as desired.
	\end{itemize}
\end{proof}

The above guarantees that any element $a_i$, as described in Lemma~\ref{lem:chase is a bunch of chains}, appears \emph{exactly} at depth $i$ in the chase:

\begin{lemmarep}\label{lem:depth is exact}
	Let $a_i$ be a term from $\adom(\inst_n)$ as described in Lemma~\ref{lem:chase is a bunch of chains}, thus with $i \leq n$. 
	We have $a_i \in \adom(\inst_i)$.
\end{lemmarep}

\begin{proofsketch}
	We proceed by induction on $i$.
	For $a_0$, the claim is trivial as $a_0 \in \adom(\inst)$.
	For $a_{i +1}$, we know it is introduced by applying Rule~\ref{rule:existential} on $a_{i}$ (mapping $y$ to $a_i$), thus relying on an atom $\pG(a_i, e)$ to be satisfied for some $\pEnd(e) \in \inst$.
	Using Lemma~\ref{lem:magic}, one can establish that $\pG(a_i, e) \in \inst_{i}$, thus $a_{i+1}$ is introduced at step $i+1$.
\end{proofsketch}

\begin{proof}
	We proceed by induction on $n$.
	
	For $n = 0$, $\adom(\inst_0) = \adom(\inst)$, so since by definition $a_0 = a$ for all $a \in \adom(\inst)$, the claim holds.
	
	Let us assume the claim holds at some step $n \geq 0$ and consider a term $a_i \in \adom(\inst_{n+1})$.
	If $a_i \in \adom(\inst_n)$ then we are done by induction hypothesis.
	Otherwise $a_i \in \adom(\inst_{n+1}) \setminus \adom(\inst_n)$, thus $a_i$ has been freshly introduced by some application of \ref{rule:existential} mapping $y$ on some term $t$ of $\inst_{n}$ and $z$ on a term $e$ such that $\pG(t, e), \pEnd(e) \in \inst_n$ (recall that, in the derivation from $\inst_n$ to $\inst_{n+1}$, we apply first the existential rule \ref{rule:existential}, which does not produce any $\pG$- nor $\pEnd$-atoms, which is why we know $\pG(t, e), \pEnd(e) \in \inst_n$).
	Using Lemma~\ref{lem:chase is a bunch of chains}, we deduce that $i \geq 1$ and that $t = a_{i - 1}$.
	Since $a_{i - 1} \in \adom(\inst_n)$, by induction hypothesis we obtain $a_{i-1} \in \adom(\inst_{i-1})$.
	We now want to prove that $\pG(a_{i - 1}, e), \pEnd(e) \in \inst_{i-1}$.
	This would conclude as it proves that \ref{rule:existential} is applicable on $\inst_{i-1}$ and therefore that $a_i \in \adom(\inst_i)$ as desired.
	
	Since no rule can ever produce $\pEnd$-atoms, we have $\pEnd(e) \in \inst$, and in particular $\pEnd(e) \in \inst_{i-1}$.
	It remains to prove that $\pG(a_{i - 1}, e) \in \inst_{i-1}$.
	Let us now set $j$ the smallest integer such that $\pG(a_{i-1}, e) \in \inst_j$ (recall we know $\pG(a_{i-1}, e) \in \inst_{n}$, so this is well-defined).
	If $j = 0$, then $\pG(a_{i-1}, e) \in \inst_0$, thus $\pG(a_{i-1}, e) \in \inst_{i-1}$ and we are done.
	Otherwise $j \geq 1$ and we apply Lemma~\ref{lem:magic} to the fact $\pG(a_{i-1}, e) \in \inst_j \setminus \inst_{j-1}$.
	If we are in the case $a_{i-1} \in \adom(\inst_j) \setminus \adom(\inst_{j-1})$, then it must be that $j = i-1$ since $a_{i-1} \in \adom(\inst_{i-1})$, thus $\pG(a_{i-1}, e) \in \inst_{i-1}$ and we are done.
	Otherwise we have that $\pFlood(a_{i-1}) \in \inst_j$.
	Due to our rules, this is only possible if $\pFlood(a_{i-1}) \in \inst$ or if $a_{i-1}$ has an $\pSucc$-predecessor in $\inst_j$.
	In the first case, we have $\pFlood(a_{i-1}) \in \inst$ and recall that $\pEnd(e) \in \inst$.
	Thus \ref{rule:successor end} and \ref{rule:flood propagate} apply on $e$ already in $\inst$, making \ref{rule:flood generate} applicable with $x = a_{i-1}$, and $y = z = e$, thus producing atom $\pG(a_{i-1}, e)$ already in $\inst_0$, and we are done.
	In the latter case, $a_{i-1}$ has an $\pSucc$-predecessor in $\inst_j$, which cannot be $a_i$ yet, so, according to Lemma~\ref{lem:chase is a bunch of chains}, we have $a_{i-1}$ has an $\pSucc$-predecessor in $\inst_0$.
	Here again, \ref{rule:successor end} applies on $e$ in $\inst$ and \ref{rule:flood propagate} apply on both $a_{i-1}$ and $e$ already in $\inst_0$, making \ref{rule:flood generate} applicable with $x = a_{i-1}$, and $y = z = e$, thus producing atom $\pG(a_{i-1}, e)$ already in $\inst_0$, and we are done.
\end{proof}

\begin{toappendix}
	We now state a direct corollary of Lemmas~\ref{lem:chase is a bunch of chains} and \ref{lem:depth is exact}, simply stating that if $t$ is the $\pSucc$-predecessor of $t'$, and that $t'$ is a term of $\inst_n$, then $t$ is a term of $\inst_{n+1}$.
	In the upcoming proof of Lemma~\ref{lem:central part stabilizes}, it will spare us several case distinctions between terms from $\inst$ and fresh terms introduced along a chain.
	\begin{corollary}\label{coro:tied depths}
		For every $n \geq 0$, if $\pSucc(t, t') \in \inst_n$ and $t' \in \adom(\inst_i)$ for some $0 \leq i \leq n$, then $t \in \adom(\inst_{i+1})$.
	\end{corollary}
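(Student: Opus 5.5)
The plan is to read off the position of $t$ and $t'$ on the chains of Lemma~\ref{lem:chase is a bunch of chains} and then use Lemma~\ref{lem:depth is exact} to convert chain positions into chase depths. By Lemma~\ref{lem:chase is a bunch of chains}, applied at step $n$, the atom $\pSucc(t,t') \in \inst_n$ falls into one of three groups. It either stems from $\inst$ (i.e.\ $\pSucc(t,t')\in\inst$ or $\pSuccz(t,t')\in\inst$), or it is a loop $\pSucc(t,t)$ with $\pEnd(t)\in\inst$, or it is a chain atom $\pSucc(a_{k+1},a_k)$ for some $a\in\adom(\inst)$ and $0\le k<m_a$. I would treat these three cases separately.

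In the first two cases, $t$ is a term occurring in $\inst$. Hence $t\in\adom(\inst)=\adom(\inst_0)$. Since the sequence $(\inst_j)_j$ is monotone ($\inst_j\subseteq\inst_{j+1}$), this gives $t\in\adom(\inst_0)\subseteq\adom(\inst_{i+1})$, and these cases are done.

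In the third case, $t=a_{k+1}$ and $t'=a_k$. By Lemma~\ref{lem:depth is exact}, $t=a_{k+1}\in\adom(\inst_{k+1})$. It therefore suffices to show $k\le i$, because monotonicity then yields $t\in\adom(\inst_{k+1})\subseteq\adom(\inst_{i+1})$. To get $k\le i$, I use the hypothesis $t'=a_k\in\adom(\inst_i)$. Applying Lemma~\ref{lem:chase is a bunch of chains} to $\inst_i$, the domain of $\inst_i$ consists of the terms $a_j$ with $j\le m_{a,i}$, where $m_{a,i}\le i$. Because the chains are built monotonically, the naming of terms at step $i$ agrees with the naming at step $n$. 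So $a_k\in\adom(\inst_i)$ forces $k\le m_{a,i}\le i$.

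The main obstacle is exactly this consistency of naming between steps $i$ and $n$. Lemma~\ref{lem:chase is a bunch of chains} is stated separately for each $n$, so one must argue that the terms $a_j$ given at step $i$ are the same terms as those given at step $n$. I would handle this by re-examining the inductive proof of Lemma~\ref{lem:chase is a bunch of chains}: each step only appends a fresh term $a_{m_a+1}$ to the end of the chain of $a$ and keeps all earlier names unchanged. This shows that the chain of $a$ in $\inst_i$ is an initial segment of the chain of $a$ in $\inst_n$.

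With that in place, the indexing is well defined, and the position $k$ of a fresh term is bounded by the step at which it first appears, which is what the third case needs.
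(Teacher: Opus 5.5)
Your proposal is correct and follows essentially the same route as the paper's proof. You split on whether the $\pSucc$-atom comes from $\inst$ or lies on a chain, apply Lemma~\ref{lem:chase is a bunch of chains} at step $i$ to get $k \leq i$, and conclude with Lemma~\ref{lem:depth is exact} and monotonicity. The one thing you add is an explicit check that the chain naming at step $i$ agrees with the naming at step $n$, which the paper uses without comment.
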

	\begin{proof}
		We first rely on the description of the $\pSucc$-atoms in $\inst_n$.
		If both $t, t' \in \adom(\inst)$, the claim is clear.
		Otherwise $t = a_{j+1}$ and $t' = a_j$ for some $0 \leq j \leq n$ and $a \in \adom(\inst)$.
		We have $a_j \in \adom(\inst_i)$ by assumption, thus $j \leq i$ by Lemma~\ref{lem:chase is a bunch of chains} applied to $\inst_i$.
		Therefore $j + 1 \leq i + 1$ and Lemma~\ref{lem:depth is exact} now gives $t = a_{j + 1} \in \adom(\inst_{i+1})$ as desired.
	\end{proof}
\end{toappendix}

To proceed with the construction of homomorphism $\ahom' : \query \rightarrow \inst_\stopthecount$ as previously sketched. We thus decompose the homomorphism $\ahom : \query \rightarrow \inst_N$ to identify which parts of $\query$ are mapped along $\pSuccz$-$\pSucc$-chains.
To do so, we define the $\set{\pSuccz, \pSucc}$-graph of an instance $\mathcal{J}$: its nodes are elements of $\adom(\mathcal{J})$ and there is an edge from $a$ to $b$ if and only if $\pSuccz(a,b)$ or $\pSucc(a,b)$ belongs to $\mathcal{J}$. 
We denote this graph by $\SGraph^{\mathcal{J}}$. 
We then consider weakly connected components (WCC) of this graph. Two terms $a$ and $b$ are in the same WCC if there is an undirected path from $a$ to $b$ in $\SGraph^{\mathcal{J}}$.

We now come back to defining $\ahom'$. Let $V_1, \dots, V_\ell$ be the WCCs of $\SGraph^{\query}$.
We apply Lemma~\ref{lem:chase is a bunch of chains} to $\inst_N$ and obtain the terms as described in its statement.
For each $1 \leq l \leq \ell$, let $d_l$ be the minimum $d$ such that $h(v) = a_d$ for some $v \in V_l$ and $a \in \adom(\inst)$.
We define the mapping $\ahom' : \Vars(\query) \rightarrow \adom(\inst_\stopthecount)$ by `shifting' each WCC forward along the chain it lies on, unless it is already close to $\inst$:
\begin{align*}
	\ahom'(v) \coloneqq 
	\left\lbrace
	\begin{array}{ll}
		h(v) & \text{ if } d_l \leq \size{\query} \text{ where } v \in V_l
		\\
		a_{i - d_l} & \text{ otherwise, if } h(v) = a_i \text{ and } v \in V_l
	\end{array}
	\right.
\end{align*}
We first verify that $\ahom'$ indeed maps elements within $\adom(\inst_\stopthecount)$.
Consider a variable $v$ of $\query$ and let $V_l$ be its WCC in $\SGraph^{\query}$.
Let $v_{min}$ be the variable of $V_l$ that realizes $d_l$, for which $h(v_{min}) = a_{d_l}$ for some $a \in \adom(\inst)$.
By definition of $V_l$, there exists an undirected (simple) path in $\SGraph^{\query}$ from $v$ to $v_{min}$, whose length is at most $\size{\query}$.
From $\ahom$ being a homomorphism, the atoms that form this path are mapped via $\ahom$ on atoms described by Lemma~\ref{lem:chase is a bunch of chains}.
It is thus clear that $h(v)$ is some $a_i$ for $a \in \adom(\inst)$ and $0 \leq i \leq d_l + \size{\query}$.
Now, if $d_l \leq \size{q}$, then $\ahom'(v) = \ahom(v) = a_i$ and we have $i \leq 2\size{\query}$.
Otherwise $\ahom'(v) = a_{i - d_l}$ and $i - d_l \leq \size{\query} \leq 2\size{\query}$.
In both cases, Lemma~\ref{lem:depth is exact} warrants $\ahom'(v) \in \adom(\inst_\stopthecount)$.

We now claim that $\ahom'$ is a homomorphism from $\query$ to $\inst_\stopthecount$. 
Let us denote the restriction of $\inst_N$ to $\adom(\inst_n)$ for some $n \leq N$ by $\inst_N\vert_{n}$.
We actually prove that $\ahom'$ is a homomorphism from $\query$ to $\inst_N\vert_{\stopthecount-1}$.
This is sufficient as all atoms from $\inst_N$ that only involve terms from $\adom(\inst_{M-1})$ are already in $\inst_M$.
This is stated in the following lemma, whose proof is by induction and heavily relies on the previous lemmas:
\begin{lemmarep}\label{lem:central part stabilizes}
	For every $n \geq M$, $\inst_{n}\vert_{M - 1} = \inst_{M}\vert_{M-1}$.
	In particular, for $n = N$ we have $\inst_N\vert_{\stopthecount-1} = \inst_\stopthecount\vert_{\stopthecount-1}$.
\end{lemmarep}

\begin{proof}
	Notice that $\inst_{M-1}$ is well-defined since $M \geq 1$.
	
	We proceed by induction on $n$.
	
	For $n = M$, the claim is trivial.
	
	Let us now assume that the claim holds until some $n \geq M$, that is we have $\inst_{n}\vert_{M - 1} = \inst_{M}\vert_{M-1}$.
	Notice that $\inst_{n + 1}\vert_{M - 1} \supseteq \inst_{M}\vert_{M-1}$ is by definition; it remains to prove $\inst_{n + 1}\vert_{M - 1} \subseteq \inst_{M}\vert_{M-1}$.
	We proceed by induction on the derivation $D$ that leads from $\inst_{n}\vert_{M - 1}$ to $\inst_{n + 1}\vert_{M - 1}$.
	If $D$ is empty, we have $\inst_{n + 1}\vert_{M - 1} = \inst_{n}\vert_{M - 1}$ and the claim holds by the induction hypothesis on $n$.
	Otherwise, let us assume that the claim holds for all atoms of $\inst_{n + 1}\vert_{M - 1}$ introduced so far by $D$, and let $R$ be the next rule to be applied.
	Assume the application of $R$ leads to the introduction of an atom $\apred(t_0, \dots, t_m) \in \inst_{n + 1}$ with $t_i \in \adom(\inst_{M-1})$ for every $1 \leq i \leq m$.
	We treat all nine possible cases for $R$.
	\begin{itemize}[leftmargin=1cm]
		\item[\ref{rule:successor}.]
		We have $\apred(t_1, \dots, t_m) = \pSucc(t_1, t_2)$ and $\pSuccz(t_1, t_2)$ has already been derived.
		In particular, $\pSuccz(t_1, t_2) \in \inst_{n+1}$.
		Since $t_1, t_2 \in \adom(\inst_{M-1})$ by assumption, the induction hypothesis on $D$ yields $\pSuccz(t_1, t_2) \in \inst_{M}\vert_{M - 1}$.
		Using Lemma~\ref{lem:chase is a bunch of chains}, we thus obtain $\pSuccz(t_1, t_2) \in \inst_{M-1}$.
		Therefore $\ref{rule:successor}$ is already applicable on $\inst_{M-1}$ and thus $\pSucc(t_1, t_2) \in \inst_M$ as desired.
		
		\item[\ref{rule:count}.]
		We have $\apred(t_1, \dots, t_m) = \pR_{i+1}(t_1, t_2)$ and there exists some $t$ such that $\pR_i(t_1, t)$ and $\pSucc(t, t_2)$ have already been derived.
		In particular, $\pR_i(t_1, t), \pSucc(t, t_2) \in \inst_{n+1}$.
		If $t \in \adom(\inst_{M - 1})$ then we can conclude by applying the induction hypothesis for $D$ on both $\pR_i(t_1, t)$ and $\pSucc(t, t_2)$, which are thus already in $\inst_M$, which is closed under our Datalog rules, in particular under \ref{rule:count}, thus $\pR_{i+1}(t_1, t_2) \in \inst_{M}$.
		We now focus on the case of $t \notin \adom(\inst_{M - 1})$ and apply Corollary~\ref{coro:tied depths}, recalling that $\pSucc(t, t_2)$ and $t_2 \in \adom(\inst_{M-1})$ to obtain $t \in \adom(\inst_M)$.
		In particular, $\pSucc(t, t_2) \in \inst_{M}$ by Lemma~\ref{lem:chase is a bunch of chains}.
		We now distinguish two cases depending on whether $\pR_{i}(t_1, t) \in \inst_M$ or not.
		If $\pR_{i}(t_1, t) \in \inst_M$, then closure under Datalog rules of $\inst_M$ concludes as \ref{rule:count} is already applicable in $\inst_M$.
		Otherwise $\pR_{i}(t_1, t) \notin \inst_M$ and we can apply Lemma~\ref{lem:magic} to obtain either $t_1 \in \adom(\inst_{M}) \setminus \adom(\inst_{M-1})$ or $\pFlood(t_1) \in \inst_M$.
		The former is a contradiction with our assumption that $t_1 \in \adom(\inst_{M-1})$.
		The latter means that \ref{rule:flood generate} triggers already in $\inst_M$ for $x = t_1$ and $y = z = t_2$ (since $\pFlood(t_2) \in \inst_M$ follows from $\pSucc(t, t_2) \in \inst_{M}$), thus introducing atom $\pR_{i+1}(t_1, t_2) \in \inst_M$ as desired.
		
		\item[\ref{rule:mult}.]
		The argument is essentially the same as for $R = \ref{rule:count}$.
		We have $\apred(t_1, \dots, t_m) = \pT_{i}(t_1, t_2, t_3)$ and there exists some terms $s, s'$ such that $\pT_{i}(t_1, s, s')$, $\pSucc^{q_i}(s, t_2)$ and $\pSucc^{r_i}(s', t_3)$ have already been derived.
		In particular $\pT_{i}(t_1, s, s'), \pSucc^{q_i}(s, t_2), \pSucc^{r_i}(s', t_3) \in \inst_{n+1}$.
		If $s, s' \in \adom(\inst_{M-1})$ then we can again conclude using the induction hypothesis, proving that Rule~\ref{rule:mult} was already applicable in $\inst_{M}$.
		Otherwise, using $\pSucc^{q_i}(s, t_2), \pSucc^{r_i}(s', t_3) \in \inst_{n+1}$, we have two terms $u, u'$ such that $\pSucc(u, t_2), \pSucc(u', t_3) \in \inst_{n+1}$.
		We use Corollary~\ref{coro:tied depths} to obtain $u, u' \in \adom(\inst_M)$, recalling that $t_2, t_3 \in \adom(\inst_{M-1})$. 
		We now distinguish two cases depending on whether $\pT_{i}(t_1, s, s') \in \inst_M$ or not.
		If $\pT_{i}(t_1, s, s') \in \inst_M$, then closure under Datalog rules of $\inst_M$ concludes as \ref{rule:mult} is already applicable in $\inst_M$.
		Otherwise $\pT_{i}(t_1, s, s') \notin \inst_M$ and we can apply Lemma~\ref{lem:magic} to obtain either $t_1 \in \adom(\inst_{M}) \setminus \adom(\inst_{M-1})$ or $\pFlood(t_1) \in \inst_M$.
		The former is a contradiction with our assumption that $t_1 \in \adom(\inst_{M-1})$.
		The latter means that \ref{rule:flood generate} triggers already in $\inst_M$ for $x = t_1$ and $y = t_2$ and $z = t_3$, thus introducing atom $\pT_{i+1}(t_1, t_2, t_3) \in \inst_M$ as desired.

		\item[\ref{rule:flood propagate}.]
		We have $\apred(t_1, \dots, t_m) = \pFlood(t_1)$ and there exists a term $t$ such that $\pSucc(t, t_1)$ has already been derived.
		From $t_1 \in \adom(\inst_{M-1})$, applying Corollary~\ref{coro:tied depths} yields $t \in \adom(\inst_M)$.
		Then closure under Datalog rules of $\inst_M$ guarantees $\pFlood(t_1) \in \inst_M$ as \ref{rule:count} is already applicable in $\inst_M$ and we are done.
		
		\item[\ref{rule:successor end}.]
		We have $\apred(t_1, \dots, t_m) = \pSucc(t_1, t_1)$ and $\pEnd(t_1)$.
		Since $\pEnd(t_1)$ cannot be obtained by any rule, we have $\pEnd(t_1) \in \inst$, thus $\pEnd(t_1) \in \inst_M$ as desired. 
		
		\item[\ref{rule:G init}.]
		We have $\apred(t_1, \dots, t_m) = \pG(t_1, t_2)$ and there exists a term $t$ such that $\pSucc(t_1, t)$ and $\pSucc(t, t_2)$ have already been derived.
		Since we have $t_1 \in \adom(\inst_{M-1})$ and $\pSucc(t_1, t) \in \inst_{n+1}$, using Lemma~\ref{lem:chase is a bunch of chains} we get $t \in \adom(\inst_{M-1})$.
		We can thus use the induction hypothesis on both $\pSucc(t_1, t)$ and $\pSucc(t, t_2)$, that thus both belong to $\inst_M$.
		Since $\inst_M$ is closed under our Datalog rules, in particular under \ref{rule:G init}, we obtain $\pG(t_1, t_2) \in \inst_{M}$ as desired.
		
		\item[\ref{rule:G step}.]
		We have $\apred(t_1, \dots, t_m) = \pG(t_1, t_2)$ and we have atoms $\pG(t_1, t)$, $\pR_{i}(t_1, t)$ and $\pT_i(t_1, t, t_2)$ that have already been derived.
		We consider the smallest $j \leq n+1$ such that $\pT_i(t_1, t, t_2) \in \inst_{j+1} \setminus \inst_j$ and apply Lemma~\ref{lem:magic} on that fact.
		If we have some $k \geq 0$ such that $\pSucc^k(t_1, t) \in \inst_j$, then, joint with Lemma~\ref{lem:chase is a bunch of chains} it guarantees $t \in \adom(\inst_{M-1})$.
		We can thus apply the induction hypothesis on all three atoms $\pG(t_1, t)$, $\pR_{i}(t_1, t)$ and $\pT_i(t_1, t, t_2)$ and conclude with closure of $\inst_M$ under our Datalog rules.
		Otherwise we have $\pFlood(t_1), \pFlood(t_2) \in \inst_j \subseteq \inst_{n+1}$.
		Since $t_1, t_2 \notin \adom(\inst_{n+1}) \setminus \adom(\inst_n)$ as $n +1 > M$ and $t_1, t_2 \in \adom(\inst_{M-1})$, we have $\pFlood(t_1), \pFlood(t_2) \in \inst_n$.
		Therefore, by induction hypothesis on $n$, we have $\pFlood(t_1), \pFlood(t_2) \in \inst_{M}$.
		We use closure of $\inst_{M}$ under Datalog rules to derive 
		
		\item[\ref{rule:flood generate}.] 
		The body of \ref{rule:flood generate} guarantees that $\pFlood(t_1), \dots, \pFlood(t_m)$ have all already been derived.
		Thus by induction hypothesis we have $\pFlood(t_1), \dots, \pFlood(t_m) \in \inst_M$.
		Closure of $\inst_M$ under Datalog rules then concludes.
		
		\item[\ref{rule:existential}.]
		All atoms in the head of \ref{rule:existential} features the freshly introduced element.
		But since $n > M-1$ and that all $t_i$'s belong to $\adom(\inst_{M-1})$, the atom $\apred(t_1, \dots, t_m)$ cannot possibly have been introduced by \ref{rule:existential}.
		
	\end{itemize}
\end{proof}
We finally prove that $h'$ is a homomorphism from $\query$ to $\inst_N\vert_{\stopthecount-1}$.
Since $\pEnd$-atoms cannot be introduced during the chase, any $\pEnd$-atom $\pEnd(t)$ is mapped through $\ahom$ to an atom of $\inst$, and thus $\ahom'(t)=\ahom(t)$, so $\pEnd(\ahom'(t))\in\inst_N\vert_{\stopthecount-1}$.
The case of $\pSuccz$- and $\pSucc$-atoms is also easy: consider an atom $\apred(t, t') \in \query$ for $\apred \in \{ \pSuccz, \pSucc \}$. 
By definition, $t$ and $t'$ belong to a same WCC $V_l$.
If $\ahom$ and $\ahom'$ coincide on $V_l$ (that is the case if $d_l \leq \size{\query}$), then $\apred(\ahom'(t), \ahom'(t'))$ immediately belongs to $\inst_\stopthecount$ via Lemma~\ref{lem:central part stabilizes}.
Otherwise, $\ahom(t) = a_i$ and $\ahom(t') = a_j$ for some $a \in \adom(\inst)$ and $i, j$ such that $i=j+1$ or $i=j-1$ by Lemma~\ref{lem:chase is a bunch of chains}. Thus, $\ahom'(t) = a_{i - d_l}$ and $\ahom'(t') = a_{j - d_l}$, and $i - d_l$ and $j - d_l$ satisfy the same relation and are smaller than $\stopthecount$, so $\apred(\ahom'(t), \ahom'(t'))$ belongs to $\inst_\stopthecount$.

It remains to treat the case of atoms based upon other predicates.
Consider an atom $\apred(t_1, \dots, t_m) \in \query$ for some $\apred \notin \{ \pSuccz, \pSucc \}$ of arity $m$.
Since $\ahom$ is a homomorphism, we have $\apred(\ahom(t_1), \dots, \ahom(t_m)) \in \inst_N$.
If all $\ahom(t_i) \in \adom(\inst)$, then all $\ahom'(t_i) = \ahom(t_i)$ and we are done.
Otherwise, we can apply Lemma~\ref{lem:magic}.
In the case where $\ahom(t_i)$ are all flooded, then so are the $\ahom'(t_i)$ and Rule~\ref{rule:flood generate} concludes.
Otherwise, all $\ahom(t_i)$ belong to the same WCC $V_l$.
Again, if $d_l \leq \size{\query}$, then $\ahom'$ and $\ahom$ coincide on $V_l$ and we are done.
Otherwise, all $t_i$'s map to some $a_{k_i}$ for $a\in\adom(\inst)$ and $k_i<m_a$, since $a_{k_i+d_l}\in\adom(\inst_N)$. Thus, by Rule~\ref{rule:flood propagate}, $\ahom'(t_i)$ is flooded, and thus again Rule~\ref{rule:flood generate} concludes.

\section{Conclusion}
\label{sec:conclusion}

We have shown that decidable BCQ entailment does not imply the decidability of \bts membership or chase termination for concrete classes of existential rules. This explains the necessity of diverse, class-specific techniques for proving termination, as BCQ entailment offers no systematic tool for this purpose.
Future work includes proving that entailment of recursive C2RPQs is also decidable for the class presented here. Conversely, it remains an open problem to identify a homomorphism-closed query language whose decidable entailment is sufficient to guarantee the decidability of chase termination.

\section*{Acknowledgments}

This work has been partially supported by ANR grant EXPAND (ANR-25-CE23-1215).
The authors would like to thank David Carral who hinted this topic and provided useful insights.
He still has not played \textit{Celeste} though.

\bibliographystyle{named}
\bibliography{ijcai26}

\clearpage
\appendix

\end{document}